\renewcommand\subsubsection{\@startsection{subsubsection}{3}{\z@}%
                       {-18\p@ \@plus -4\p@ \@minus -4\p@}%
                       {0.5em \@plus 0.22em \@minus 0.1em}%
                       {\normalfont\normalsize\bfseries\boldmath}}
\small\color{darkgray}
\def\DEF{:=}
\def\EQDEF{\overset{\Delta}\Longleftrightarrow}
\def\card#1{|#1|}
\def\B{\mathbb{B}}
\def\M{\mathbb{M}}
\def\R{\mathbb{R}}
\def\nrange#1#2{\{#1, \dots, #2\}}
\def\H{(\B\cup\{*\})}
\def\range#1{[#1]}
\def\dite#1#2{\xrightarrow[\mathrm{#1}]{#2}}
\def\dReach#1#2{\rho^{#2}_{\mathrm{#1}}}
\def\some{\sigma}
\def\us{s}
\def\uf{a1}
\def\ua{a}
\def\up{mp}
\def\ite#1{\dite{\ua}{#1}}
\def\site#1{{\dite{\us}{#1}}}
\def\fite#1{{\dite{\uf}{#1}}}
\def\closure{\nolinebreak\negthickspace{}^*\,}
\def\reach#1{\ite{#1}\closure}
\newcommand{\mpup}{{\scriptstyle \nearrow}}
\newcommand{\mpdw}{{\scriptstyle \searrow}}
\newcommand{\mpDom}{\mathbb P}
\newcommand{\mptob}{\gamma}
\newcommand{\mpite}[1]{\dite{\up}{#1}}
\newcommand{\mpreach}[1]{\dite{\up}{#1}\closure}
\newcommand{\mpReach}[1]{\dReach{\up}{#1}}
\tikzstyle{every matrix}=[ampersand replacement=\&]
\tikzstyle{shorthandoff}=[]
\tikzstyle{shorthandon}=[]
\title{Most Permissive Semantics of Boolean Networks\\(Technical Report)}
\author{Thomas Chatain\inst{1} \and Stefan Haar\inst{1} \and Juraj Kol\v{c}\'ak\inst{1} \and Lo\"ic Paulev\'e\inst{2}}
\institute{
LSV, ENS Paris-Saclay, INRIA, CNRS, France
\and
Université Bordeaux, Bordeaux INP, CNRS, LaBRI, UMR5800\\
F-3304  Talence, France
}
\begin{document}
\sloppy

\maketitle

\begin{abstract}
As shown in \cite{MPBNs}, the usual update modes of Boolean networks (BNs), including
synchronous and (generalized) asynchronous, fail to capture behaviors introduced by multivalued refinements.
Thus, update modes do not allow a correct abstract reasoning on dynamics of biological systems, as
they may lead to reject valid BN models.

This technical report lists the main definitions and properties of the most permissive semantics of
BNs introduced in \cite{MPBNs}.
This semantics meets with a correct abstraction of any multivalued refinements, with any update mode.
It subsumes all the usual updating modes, while enabling new behaviors achievable by more concrete
models.
Moreover, it appears that classical dynamical analyzes of reachability and attractors have a simpler
computational complexity:
\begin{itemize}
    \item reachability can be assessed in a polynomial number of iterations.
The computation of iterations is in NP in the very general case, and is linear when local functions
are monotonic, or with some usual representations of functions of BNs (binary decision diagrams,
Petri nets, automata networks, etc.).
Thus, reachability is in P with locally-monotonic BNs, and P$^{\text{NP}}$ otherwise
        (instead of being PSPACE-complete with update modes);
    \item
deciding wherever a configuration belongs to an attractor is in coNP with locally-monotonic BNs, and
coNP$^{\text{coNP}}$ otherwise (instead of PSPACE-complete with update modes).
\end{itemize}

Furthermore, we demonstrate that the semantics completely captures any behavior achievable with any
multilevel or ODE refinement of the BN;
and the semantics is minimal with respect to this model refinement criteria:
to any most permissive trajectory, there exists a multilevel refinement of the BN which can
reproduce it.

In brief, the most permissive semantics of BNs enables a correct abstract reasoning on dynamics of
BNs, with a greater tractability than previously introduced update modes.
\end{abstract}

\clearpage

\section{Boolean networks}

The Boolean domain is denoted by $\B\DEF\{0,1\}$.
Given a \emph{configuration} \(x\in \B^n\) and \(i \in \range n\), we denote \(x_i\) the
\(i\textsuperscript{th}\) component of \(x\), so that \(x = x_1 \dots x_n\),
and $\bar x$ the complement of $x$, i.e., $\forall i\in\range n$, $\bar x_i=1-x_i$.
Given two configurations $x,y\in \B^n$, the components having a different state are noted
$\Delta(x,y)\DEF\{ i\in \range n\mid x_i\neq y_i\}$.
Symbol \(\wedge\) denotes the logical conjunction, \(\vee\) the disjunction, and \(\neg\) the
negation.
Given a finite set  \(S\), \(\card S\) is its cardinality.

\begin{definition}[Boolean network]
A Boolean network (BN) of dimension $n$ is a function
\(f:\B^n\to\B^n\).
For each \(i\in\range n\), \(f_i:\B^n\to\B\) denotes the local function of its $i$th component.
\end{definition}

\begin{definition}[Locally-monotonic BN]
    A BN \(f:\B^n\to\B^n\) is \emph{locally monotonic} whenever for each component
    \(i\in\nrange 1n\), there exists an ordering of components
    \(\preceq^i\in \{\leq,\geq\}^n\)
    such that
    \(
        \forall x,y\in\B^n,
        (x_1\preceq^i_1 y_1 \wedge ... \wedge x_n\preceq^i_n y_n) \Rightarrow
        f_i(x) \leq f_i(y)
    \).
\end{definition}

\begin{example}
    The BN \(f\) of dimension \(3\) defined as
\begin{align*}
f_1(x) &= x_3 \wedge (\neg x_1 \vee \neg x_2)\\
f_2(x) &= x_3 \wedge x_1\\
f_3(x) &= x_1\vee x_2\vee x_3\enspace,
\end{align*}
is locally monotonic, for instance with
\(\preceq^1=(\geq,\geq,\leq)\) and \(\preceq^2=\preceq^3=(\leq,\leq,\leq)\).
\end{example}

\section{Most Permissive Boolean Networks}

\subsection{Definitions}

We give two different definitions which are equivalent in term of reachability properties.
The first one introduces dynamic states, the second one relies on the computation of
hypercubes.

\subsubsection{With dynamic states}

A most-permissive configuration assigns to each BN component one state among four, noted
\(\mpDom\DEF\{0, \mpup, \mpdw, 1\}\).
The possible binary
interpretations of a configuration \(x\in\mpDom^n\) are denoted by
\begin{equation}
    \mptob(x)\DEF\{\tilde x\in\B^n\mid \forall i\in\range n, x_i\in\B\Rightarrow \tilde x_i=x_i\}
    \enspace.
\end{equation}

The semantics is defined as an irreflexive binary relation between configurations in \(\mpDom^n\):
\begin{definition}[Most permissive semantics \(\mpite f\)]\label{def:mp}
    \begin{equation*}
        \begin{aligned}
    \forall x,y\in\mpDom^n,\quad
    x \mpite f y&\EQDEF \exists i\in \range n: \Delta(x,y)=\{i\} \\
                & \wedge y_i=
  \begin{cases}
      \mpup & \text{if }x_i\neq 1 \wedge \exists \tilde x\in\mptob(x): f_i(\tilde x)\\
      1 & \text{if }x_i=\mpup\\
      \mpdw & \text{if }x_i\neq 0 \wedge \exists \tilde x\in\mptob(x): \neg f_i(\tilde x)\\
      0 &\text{if }x_i=\mpdw
  \end{cases}
  \end{aligned}
  \end{equation*}
\end{definition}

The set of binary configurations reachable from \(x\in\B^n\) with the most
permissive semantics is given by
\begin{equation}
    \mpReach f (x) \DEF \{ y\in\B^n\mid x\mpite f\closure y\}
    \enspace.
\end{equation}

The following figure shows the automaton of the state change of a component $i$ in the most permissive semantics,
following notations of Def.~\ref{def:mp}.
The labels $f_i(\tilde x)$ and $\neg f_i(\tilde x)$ on edges are the conditions for firing the
transitions, where \(\tilde x\in\mptob(x)\); the label
$\epsilon$ indicates transitions that can be done without condition:
\begin{center}
\begin{tikzpicture}[state/.style={circle,draw},node distance=2cm,alias/.style={font=\tt}]
\node[state] (on) {1};
\node[state,below left of=on] (onoff) {\(\mpup\)};
\node[state,below right of=onoff] (off) {0};
\node[state,above right of=off] (offon) {\(\mpdw\)};
\path[->,>=latex]
    (off) edge node[below,sloped] {$f_i(\tilde x)$} (offon)
    (on) edge node[above,sloped] {$\neg f_i(\tilde x)$} (onoff)
    (offon) edge node[above,sloped] {$\epsilon$} (on)
    (onoff) edge node[below,sloped] {$\epsilon$} (off)
    (offon) edge[bend right=10] node[above, sloped] {$\neg f_i(\tilde x)$} (onoff)
    (onoff) edge[bend right=10] node[below, sloped] {$f_i(\tilde x)$} (offon)
;
\end{tikzpicture}
\end{center}
With the given definition, only one automaton is updated at a time.
However, it is equivalent to allow any number of simultaneous changes, as long as fully asynchronous
updates are considered.

Given a configuration \(x\in\mpDom^n\), one can remark that as long as only transitions towards
dynamic states \(\mpup\) or \(\mpdw\) are performed, then the set of binary interpretations
\(\mptob\) is growing.
As a consequence, the ordering of such transitions does not matter.
\begin{proposition}
    \label{pro:mp-monotonicity}
Given a BN $f$ of dimension $n$, $\forall x,y\in\mpDom^n$ such that $x\mpite f y$ and
$\forall j\in\Delta(x,y): y_j\notin\B$, $\mptob(x)\subseteq \mptob(y)$.
\end{proposition}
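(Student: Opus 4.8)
The plan is to unfold the definition of the transition relation \(x\mpite f y\) from Def.~\ref{def:mp} in the case where the single changed component \(i=\Delta(x,y)\) satisfies \(y_i\notin\B\), and then show directly that every binary interpretation of \(x\) is still a binary interpretation of \(y\). Since only one component changes, it suffices to compare \(x\) and \(y\) at component \(i\): for every \(j\neq i\) we have \(x_j=y_j\), so the constraint imposed by \(\mptob\) at coordinate \(j\) is identical for \(x\) and \(y\).

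First I would split on the two sub-cases of Def.~\ref{def:mp} that produce a non-Boolean \(y_i\), namely \(y_i=\mpup\) (which requires \(x_i\neq 1\)) and \(y_i=\mpdw\) (which requires \(x_i\neq 0\)); the other two branches of the case distinction yield \(y_i\in\B\) and are excluded by hypothesis. In the case \(y_i=\mpup\): if \(x_i\in\B\) then by the side condition \(x_i=0\); a configuration \(\tilde x\in\mptob(x)\) forces \(\tilde x_i=0\), whereas \(\mptob(y)\) imposes no constraint on coordinate \(i\) since \(y_i=\mpup\notin\B\). Hence the constraints defining \(\mptob(y)\) are a subset of those defining \(\mptob(x)\), giving \(\mptob(x)\subseteq\mptob(y)\). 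If instead \(x_i\notin\B\), then neither \(\mptob(x)\) nor \(\mptob(y)\) constrains coordinate \(i\), and since all other coordinates agree, \(\mptob(x)=\mptob(y)\). The case \(y_i=\mpdw\) is symmetric, with \(x_i=1\) playing the role of \(x_i=0\). In all cases \(\mptob(x)\subseteq\mptob(y)\).

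I do not expect a genuine obstacle here: the statement is essentially an observation about monotonicity of the family of constraints \(\{x_i\in\B\Rightarrow\tilde x_i=x_i\}\) under the allowed transitions, and the argument is a short case analysis once the definitions are unfolded. The only point requiring a little care is making explicit that a transition into a dynamic state either (a) goes from a Boolean value to a dynamic value, thereby strictly dropping one constraint, or (b) goes from a dynamic value to another dynamic value, thereby leaving \(\mptob\) unchanged — in particular it never happens that a transition into \(\mpup\) or \(\mpdw\) starts from the "wrong" Boolean value, which is precisely what the side conditions \(x_i\neq 1\) and \(x_i\neq 0\) guarantee.
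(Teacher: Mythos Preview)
Your proposal is correct. The paper does not give an explicit proof of this proposition---it is stated as an immediate observation following the remark that ``as long as only transitions towards dynamic states \(\mpup\) or \(\mpdw\) are performed, then the set of binary interpretations \(\mptob\) is growing''---and your case analysis on Def.~\ref{def:mp} is exactly the routine unfolding that justifies that remark.
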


Given a configuration \(x\in\B^n\), if we consider any reachable configuration where changed
components are in an dynamic state, and from which there is no more transitions from binary states
towards dynamic states, then the set of binary interpretation of this later configuration
includes the set of all binary configurations reachable from \(x\):
\begin{proposition}
Given a BN \(f\) of dimension \(n\) and a binary configuration \(x\in\B^n\),
let us consider a configuration \(z\in\mpDom^n\) such that
\(x\mpreach f z\),
$\forall i\in\Delta(x,z): z_i\notin\B$, and there is no \(z'\in\mpDom^n\)
such that \(z\mpite f z\) with for \(j\in\Delta(z,z')\),
\(z_j\in\B\) and \(z'_j\notin\B\),
then
\(\mpReach f(x)\subseteq\mptob(z)\).
\end{proposition}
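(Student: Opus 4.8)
The plan is to exhibit an invariant preserved along every most permissive trajectory issued from $x$, and to argue that the third hypothesis on $z$ is exactly what prevents this invariant from being broken. Set $D\DEF\Delta(x,z)$. Since $z_i\notin\B$ for $i\in D$ while $z_i=x_i\in\B$ for $i\notin D$, the binary interpretations of $z$ are exactly $\mptob(z)=\{\tilde z\in\B^n\mid\forall i\notin D,\ \tilde z_i=x_i\}$; hence proving $\mpReach f(x)\subseteq\mptob(z)$ reduces to showing that every $y\in\mpReach f(x)$ satisfies $y_i=x_i$ for all $i\notin D$.

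First I would restate the third hypothesis in a usable form. By inspection of Definition~\ref{def:mp} (or of the automaton picture), the only transitions leaving a binary component value are $0\to\mpup$ and $1\to\mpdw$; therefore "no transition $z\mpite f z'$ turning a binary component of $z$ into a dynamic one" means precisely that, for every $j\notin D$, the local function $f_j$ is constant on $\mptob(z)$ and equal to $x_j$ — i.e. $\forall\tilde z\in\mptob(z):\neg f_j(\tilde z)$ when $x_j=0$, and $\forall\tilde z\in\mptob(z): f_j(\tilde z)$ when $x_j=1$.

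Next I would prove, by induction on the number of transitions, that every $w\in\mpDom^n$ with $x\mpreach f w$ satisfies $w_i=x_i$ for all $i\notin D$. The base case $w=x$ is immediate. For the step, take $x\mpreach f w\mpite f w'$ and let $j$ be the unique component with $\Delta(w,w')=\{j\}$. If $j\in D$, then $w'$ and $w$ coincide outside $D$ and the induction hypothesis carries over. If $j\notin D$, the induction hypothesis gives $w_j=x_j\in\B$, so the transition is either $0\to\mpup$ or $1\to\mpdw$. In the first case $w_j=x_j=0$ and the transition requires some $\tilde w\in\mptob(w)$ with $f_j(\tilde w)$. But the induction hypothesis also yields $\mptob(w)\subseteq\mptob(z)$ (any $\tilde w\in\mptob(w)$ agrees with $w$, hence with $x$, hence with $z$, on every $i\notin D$, and those are exactly the binary components of $z$), so $\tilde w\in\mptob(z)$; the restated hypothesis then forces $\neg f_j(\tilde w)$, a contradiction. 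The case $1\to\mpdw$ is symmetric. Hence $j\notin D$ is impossible and the invariant holds for $w'$.

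Applying the invariant to a binary $y$ with $x\mpreach f y$ gives $y_i=x_i=z_i$ for all $i\notin D$, i.e. $y\in\mptob(z)$, which is the claim. I expect the only delicate point to be the inductive step: one must notice that a binary component can only move towards a dynamic state, and then combine $\mptob(w)\subseteq\mptob(z)$ with the constancy of $f_j$ on $\mptob(z)$ to block that move for every $j\notin D$; everything else is routine bookkeeping on the sets $\mptob(\cdot)$.
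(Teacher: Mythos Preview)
Your argument is correct. The paper states this proposition without proof, so there is nothing to compare against; your induction on the length of most-permissive trajectories, with the invariant ``$w_i=x_i$ for all $i\notin D$'', is the natural direct route. The restatement of the third hypothesis as ``$f_j$ is constant equal to $x_j$ on $\mptob(z)$ for every $j\notin D$'' is accurate, and the inclusion $\mptob(w)\subseteq\mptob(z)$ indeed follows from the invariant since the binary components of $z$ are exactly those outside $D$. As a side remark, your proof never actually uses the first hypothesis $x\mpreach f z$: only hypotheses~2 and~3 (together with $x\in\B^n$) are invoked, so you have in fact established a slightly stronger statement than the one in the paper.
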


\subsubsection{With hypercubes}

The dynamic states might suggest that the most permissive semantics is close to multivalued
networks with \(4\) states.
However, notice that states \(\mpDom\) are not totally ordered by the transitions, as it is required
by multivalued networks.

We give here an equivalent definition of \(\mpReach f\) which does not relies on these dynamic
states, but on the computation of hypercubes closed by \(f\).
An hypercube within \(\B^n\) has a set of components being fixed to a Boolean state, and the others
being free (noted with \(*\)).
\begin{definition}[Hypercube]
    An \emph{hypercube} \(h\) of dimension \(n\) is a vector in $\H^n$.
    The set of its associated configurations is denoted by
    \(c(h)\DEF \{x\in\B^n\mid \forall i\in\range n, h_i\neq *\Rightarrow x_i=h_i\}\).

    Given two hypercubes \(h,h'\in\H^n\), \(h\) is \emph{smaller} than \(h'\)
    if and only if
    \(\forall i\in\range n, h'_i\neq*\Rightarrow h_i=h'_i\).
    An hypercube is \emph{minimal} if there is no different hypercubes smaller than it.

    An hypercube \(h\) is \emph{closed} by \(f\) whenever for each configuration \(x\in c(h)\),
    \(f(x)\in c(h)\).
\end{definition}

An hypercube closed by \(f\) is also known as a \emph{trap space}; if it is minimal, it is a
\emph{minimal trap space}.

We generalize the notion of closure by allowing restricting the set of components which should be
closed.
\begin{definition}[\(K\)-closed hypercube]
    Given a subset of components \(K\subseteq\range n\),
    an hypercube \(h\in\H^n\) is \(K\)-closed by \(f\) whenever for each configuration \(x\in
    c(h)\),
    for each component \(i\in K\),
    \(h_i\in\{*,f_i(x)\}\).
\end{definition}

Remark: an hypercube is closed if and only if it is \(\range n\)-closed.

\begin{example}
    Let us consider the BN \(f:\B^3\to\B^3\) with
    \(f_1(x) \DEF \neg x_2\), $f_2(x) \DEF\neg x_1$, et $f_3(x) \DEF \neg x_1 \wedge x_2$.
The hypercube \(01*\) is closed by \(f\), with \(c(01*)=\{010,011\}\).
The hypercube \(1\!*\!0\) is the smallest hypercube \(\{2,3\}\)-closed by \(f\) containing
\(110\); it is not closed by \(f\), nor the smallest hypercube \(\{2,3\}\)-closed by \(f\)
containing \(100\).
\end{example}

Starting from a binary configuration \(x\in\B^n\), the most permissive semantics can be expressed
using the computation of smallest hypercubes containing \(x\) and which are \(K\)-closed by \(f\),
for every \(K\):
\begin{itemize}
    \item \(x\) is the unique hypercube  $\emptyset$-closed by $f$ containing $x$;
    \item the change of state of component $i\in\range n$ to $\mpup$ or $\mpdw$
produces a configuration $x'$ where $\mptob(x')$ correspond to the hypercube $h\in\H^n$ with
$h_i=*$ and for each other component $j\in\range n, j\neq i$,
$h_j=x_j$.
Thus, $h$ is the smallest hypercube $\{i\}$-closed by $f$ and containing $x$;
\item
    by considering only the change of states towards $\mpup$ and $\mpdw$,
    the most permissive semantics progressively enlarges the hypercubes along the modified
    components, and each step results in a smallest hypercube $K$-closed by \(f\) and containing
    \(x\), for every $K\subseteq\range n$.
\end{itemize}

With the most permissive semantics, the change of state of a component from
a dynamic to a Boolean state is without condition, and is solely determined by its
current dynamic state: $1$ from $\mpup$ and $0$ from $\mpdw$.
Thus, starting from an initial configuration which is binary,
a component can be in the state $\mpup$ only if a preceding configuration
$x'\in\mpDom^n$ was such that $\exists z\in\mptob(x')$ with $f_i(z)=1$ (resp. $\mpdw$ if $f_i(z)=0$).

The following proposition establishes the correspondence with the initial definition with
dynamic states:
\begin{proposition}\label{prop:hypercube_reachability}
    Given a BN \(f\) of dimension \(n\) and two configurations
    \(x,y\in\B^n\),
    \(y\in\mpReach f(y)\) if and only if
    there exists
    \(K\subseteq \range n\)
    such that
    the smallest \(K\)-closed hypercube \(h\) and containing \(x\)
    verifies
    (1) \(y\in c(h)\), and
    (2) \(\forall i\in K\), there exists a configuration
    \(z\in c(h)\) such that \(f_i(z)=y_i\).
\end{proposition}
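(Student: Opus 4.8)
The plan is to prove the two implications separately, guided by the correspondence sketched above: as long as a run of \(\mpite f\) fires only transitions towards the dynamic states \(\mpup,\mpdw\), Proposition~\ref{pro:mp-monotonicity} guarantees that the set \(\mptob\) of binary interpretations only grows, and at each such step it equals \(c(h)\) for the hypercube \(h\) obtained from \(x\) by freeing exactly the coordinates currently in a dynamic state. I would first record three routine facts. (i)~For every \(K\subseteq\range n\), the family of \(K\)-closed hypercubes containing \(x\) is closed under intersection (two hypercubes that both contain \(x\) cannot disagree on a fixed coordinate, so their intersection is again a hypercube, and it is easily \(K\)-closed), hence it has a least element, written \(H_K\). (ii)~\(K\subseteq K'\) implies \(c(H_K)\subseteq c(H_{K'})\), and \((H_K)_i=x_i\) for every \(i\notin K\). (iii)~\(H_K\) is produced from the singleton hypercube \(\{x\}\) by the saturation that, while some \(i\in K\) and some \(z\) in the current hypercube \(h\) satisfy \(h_i\neq *\) and \(h_i\neq f_i(z)\), sets \(h_i\DEF *\); this saturation is confluent because each release is forced (every \(K\)-closed hypercube over \(x\) that contains the current \(h\) must already have its \(i\)-th coordinate free).

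For sufficiency, suppose \(K\) and \(h=H_K\) satisfy (1)--(2). I would replay the saturation computing \(H_K\) as an \(\mpite f\)-run: when the saturation releases \(i\) because \(f_i(z)\neq x_i\) for some \(z\) in the current hypercube, that same \(z\) belongs to the current \(\mptob\) and licenses \(x_i\to\mpup\) when \(x_i=0\) (then \(f_i(z)=1\)) or \(x_i\to\mpdw\) when \(x_i=1\) (then \(f_i(z)=0\)); confluence ensures this reaches a configuration \(w\) with \(\mptob(w)=c(h)\) and \(w_i\in\{\mpup,\mpdw\}\) for exactly the coordinates with \(h_i=*\). From \(w\) I route each released coordinate \(i\) to \(\mpup\) when \(y_i=1\) and to \(\mpdw\) when \(y_i=0\): by (2) some \(z\in c(h)=\mptob(w)\) has \(f_i(z)=y_i\), which licenses the \(\mpup\leftrightarrow\mpdw\) transition, and these transitions do not change \(\mptob\). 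Finally every dynamic coordinate is collapsed unconditionally (\(\mpup\to1\), \(\mpdw\to0\)); the result has \(i\)-th value \(y_i\) at each released \(i\), value \(x_i=y_i\) at each \(i\notin K\) (by (ii) and (1)), and value \(x_i=y_i\) at each \(i\in K\) with \(h_i\neq *\) (there \(K\)-closedness together with (1) forces \(h_i=x_i=y_i\)). Hence \(x\mpreach f y\), i.e.\ \(y\in\mpReach f(x)\).

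For necessity, take a run \(x=x^0\mpite f x^1\mpite f\cdots\mpite f x^m=y\), let \(M^{(t)}\) be the set of coordinates occupying a dynamic state at some time \(\le t\), and set \(K\DEF M^{(m)}\). I would prove by induction on \(t\) the invariant that \(\mptob(x^t)\subseteq c(H_{M^{(t)}})\) and \((H_{M^{(t)}})_i=*\) for every \(i\in M^{(t)}\). A collapse or an \(\mpup\leftrightarrow\mpdw\) swap leaves \(M^{(t)}\) unchanged and does not enlarge \(\mptob\), so the invariant survives by monotonicity (ii); releasing a coordinate already in \(M^{(t)}\) only adds to \(\mptob\) configurations obtained by flipping that coordinate, which remain inside \(c(H_{M^{(t)}})\) by the induction hypothesis; and releasing a coordinate \(i\) for the first time (so \(x^t_i=x_i\in\B\)) is licensed by some \(z\in\mptob(x^t)\subseteq c(H_{M^{(t)}})\subseteq c(H_{M^{(t+1)}})\) with \(f_i(z)\neq x_i\), after which \(M^{(t+1)}\)-closedness forces \((H_{M^{(t+1)}})_i=*\), giving both clauses at \(t+1\). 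Instantiating the invariant at \(t=m\) yields \(y\in c(H_K)\), i.e.\ property (1). For (2), fix \(i\in K\); since \(i\) visited a dynamic state and \(y_i\in\B\), its last change is \(\mpup\to1\) (if \(y_i=1\)) or \(\mpdw\to0\) (if \(y_i=0\)), and its preceding entry into that dynamic state, from some \(x^s\), required a \(z\in\mptob(x^s)\subseteq c(H_{M^{(s)}})\subseteq c(H_K)\) with \(f_i(z)=y_i\), which is exactly (2).

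I expect the main obstacle to be the invariant of the necessity direction: keeping its two clauses synchronised while a coordinate oscillates between \(\mpup\) and \(\mpdw\), collapses, and is later re-released, and in particular using the licensing witness of the \emph{first} release of a coordinate to certify that this coordinate is free in the smallest \(M^{(t+1)}\)-closed hypercube over \(x\). The symmetric care on the sufficiency side is to check that the saturation defining \(H_K\) can be replayed step for step by legal \(\mpite f\)-transitions and that a released coordinate never has to be re-fixed before the concluding collapses.
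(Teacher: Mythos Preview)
Your proof is correct. The paper itself does not supply a formal proof of this proposition: the paragraph preceding it only sketches the correspondence (the smallest \(K\)-closed hypercube containing \(x\) is exactly what one reaches by applying only \(\B\to\{\mpup,\mpdw\}\) transitions, and the final \(\{\mpup,\mpdw\}\to\B\) collapses are unconditional), and your argument fleshes out that sketch faithfully. Your three phases in the sufficiency direction (saturate to \(H_K\), reroute each dynamic coordinate towards \(y_i\) using the witness from~(2), then collapse) and your choice \(K=M^{(m)}\) in the necessity direction are precisely the ingredients the paper later spells out in the proof of Lemma~\ref{lem:reachability}, where the same set appears as \(\hat I\) and the same three-phase shortcut \(x\mpreach f\hat z\mpreach f\check z\mpreach f y\) is constructed; the only cosmetic difference is that you maintain the invariant \(\mptob(x^t)\subseteq c(H_{M^{(t)}})\) along the original run, whereas the paper rebuilds the shortcut trajectory directly.
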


\begin{example}
\def\myscale{1}
Here below are examples of smallest \(K\)-closed hypercubes containing la configuration \(000\)
(left), \(010\) (top right), and \(011\) (bottom right) for the BN \(f\) of dimension \(3\)
    defined by
    \(f_1(x) \DEF \neg x_2\),
    \(f_2(x) \DEF\neg x_1\),
    \(f_3(x) \DEF \neg x_1 \wedge x_2\).
    Configurations belonging to the hypercube are highlighted in bold;
    these verifying the reachability property are boxed.
    The hypercube \(011\)  is only one which is closed by \(f\) and minimal.

\centering
\begin{tabular}{cc|c}
    \scalebox{\myscale}{
    \begin{tikzpicture}
\matrix[column sep=0.8cm, row sep=1cm,gray] {
\node (s010) {$010$}; \&
\node (s110) {$110$};
\\
\node[black,font=\bf,draw] (s000) {$\mathbf{000}$}; \&
\node (s100) {$100$};
\\
};
\matrix[column sep=0.8cm, row sep=1cm,shift={(1cm,0.6cm)},gray] {
\node (s011) {$011$}; \&
\node (s111) {$111$};
\\
\node (s001) {$001$}; \&
\node (s101) {$101$};
\\
};
\path[gray]
(s000) edge (s010) edge (s100) edge[densely dashed] (s001)
(s110) edge (s010) edge (s100) edge (s111)
(s001) edge[densely dashed] (s011) edge[densely dashed] (s101)
(s111) edge (s011) edge (s101)
(s010) edge (s011)
(s100) edge (s101)
;
    \end{tikzpicture}}&
    \scalebox{\myscale}{\begin{tikzpicture}
\matrix[column sep=0.8cm, row sep=1cm,gray] {
\node (s010) {$010$}; \&
\node (s110) {$110$};
\\
\node[black,font=\bf] (s000) {$\mathbf{000}$}; \&
\node[black,font=\bf,draw] (s100) {$\mathbf{100}$};
\\
};
\matrix[column sep=0.8cm, row sep=1cm,shift={(1cm,0.6cm)},gray] {
\node (s011) {$011$}; \&
\node (s111) {$111$};
\\
\node (s001) {$001$}; \&
\node (s101) {$101$};
\\
};
\path[gray]
(s000) edge (s010) edge (s100) edge[densely dashed] (s001)
(s110) edge (s010) edge (s100) edge (s111)
(s001) edge[densely dashed] (s011) edge[densely dashed] (s101)
(s111) edge (s011) edge (s101)
(s010) edge (s011)
(s100) edge (s101)
;
    \end{tikzpicture}}&
    \scalebox{\myscale}{
    \begin{tikzpicture}
\matrix[column sep=0.8cm, row sep=1cm,gray] {
\node[black] (s010) {$\bf 010$}; \&
\node (s110) {$110$};
\\
\node (s000) {$000$}; \&
\node (s100) {$100$};
\\
};
\matrix[column sep=0.8cm, row sep=1cm,shift={(1cm,0.6cm)},gray] {
\node[black,draw] (s011) {$\bf011$}; \&
\node (s111) {$111$};
\\
\node (s001) {$001$}; \&
\node (s101) {$101$};
\\
};
\path[gray]
(s000) edge (s010) edge (s100) edge[densely dashed] (s001)
(s110) edge (s010) edge (s100) edge (s111)
(s001) edge[densely dashed] (s011) edge[densely dashed] (s101)
(s111) edge (s011) edge (s101)
(s010) edge (s011)
(s100) edge (s101)
;
    \end{tikzpicture}}
    \\
    \(K=\emptyset\)
    &
    \(K=\{1\}\)
    &
    \(K=\{1,2,3\}\)
    \\

    \(000\)
    &
    \(*00\)
    &
    \(01*\)

    \\\cline{3-3}
    &&
    \\
    \scalebox{\myscale}{\begin{tikzpicture}
\matrix[column sep=0.8cm, row sep=1cm,gray] {
\node[black,font=\bf,draw] (s010) {$\mathbf{010}$}; \&
\node[black,font=\bf,draw] (s110) {$\bf 110$};
\\
\node[black,font=\bf,draw] (s000) {$\mathbf{000}$}; \&
\node[black,font=\bf,draw] (s100) {$\mathbf{100}$};
\\
};
\matrix[column sep=0.8cm, row sep=1cm,shift={(1cm,0.6cm)},gray] {
\node (s011) {$011$}; \&
\node (s111) {$111$};
\\
\node (s001) {$001$}; \&
\node (s101) {$101$};
\\
};
\path[gray]
(s000) edge (s010) edge (s100) edge[densely dashed] (s001)
(s110) edge (s010) edge (s100) edge (s111)
(s001) edge[densely dashed] (s011) edge[densely dashed] (s101)
(s111) edge (s011) edge (s101)
(s010) edge (s011)
(s100) edge (s101)
;
    \end{tikzpicture}}&
    \scalebox{\myscale}{\begin{tikzpicture}
\matrix[column sep=0.8cm, row sep=1cm,gray] {
\node[black,font=\bf,draw] (s010) {$\mathbf{010}$}; \&
\node[black,font=\bf,draw] (s110) {$\bf 110$};
\\
\node[black,font=\bf,draw] (s000) {$\mathbf{000}$}; \&
\node[black,font=\bf,draw] (s100) {$\mathbf{100}$};
\\
};
\matrix[column sep=0.8cm, row sep=1cm,shift={(1cm,0.6cm)},gray] {
\node[black,draw] (s011) {$\bf 011$}; \&
\node[black,draw] (s111) {$\bf 111$};
\\
\node[black,draw] (s001) {$\bf 001$}; \&
\node[black,draw] (s101) {$\bf 101$};
\\
};
\path[gray]
(s000) edge (s010) edge (s100) edge[densely dashed] (s001)
(s110) edge (s010) edge (s100) edge (s111)
(s001) edge[densely dashed] (s011) edge[densely dashed] (s101)
(s111) edge (s011) edge (s101)
(s010) edge (s011)
(s100) edge (s101)
;
    \end{tikzpicture}}

    &\scalebox{\myscale}{
    \begin{tikzpicture}
\matrix[column sep=0.8cm, row sep=1cm,gray] {
\node (s010) {$010$}; \&
\node (s110) {$110$};
\\
\node (s000) {$000$}; \&
\node (s100) {$100$};
\\
};
\matrix[column sep=0.8cm, row sep=1cm,shift={(1cm,0.6cm)},gray] {
\node[black,draw] (s011) {$\bf011$}; \&
\node (s111) {$111$};
\\
\node (s001) {$001$}; \&
\node (s101) {$101$};
\\
};
\path[gray]
(s000) edge (s010) edge (s100) edge[densely dashed] (s001)
(s110) edge (s010) edge (s100) edge (s111)
(s001) edge[densely dashed] (s011) edge[densely dashed] (s101)
(s111) edge (s011) edge (s101)
(s010) edge (s011)
(s100) edge (s101)
;
    \end{tikzpicture}}

    \\
    \(K=\{1,2\}\)
    &
    \(K=\{1,2,3\}\)
    &
    \(K=\{1,2,3\}\)
    \\
    \(**0\)
    &
    \(***\)
    &
    \(011\)
    \\
\end{tabular}
\end{example}

\subsection{Relation with quantitative refinements}

Multivalued networks (MNs) are a generalization of BNs where the components
can take values in a finite discrete domain. Let us denote the possible
values as \(\M \DEF \{0, 1, \dots, m\}\) for some integer \(m\).
Without loss of generality, we assume the same domain of values for all the components.

\begin{definition}[Multivalued network]
A \emph{multivalued network} (MN) of dimension $n$ over a value range \(\M =
\{0, 1, \dots, m\}\) is a function
\(F: \M^n\to\{-1,0,1\}^n\).
\end{definition}
A configuration of a MN of dimension \(n\) is a vector \(x \in \M^n\).
Given two configurations $x,y\in \M^n$, the components that differ are noted
$\Delta(x,y)\DEF\{ i\in \range n\mid x_i\neq y_i\}$.

\begin{definition}[Asynchronous semantics]
Given a multivalued network $F$, the binary irreflexive relation
$\ite F\,\subseteq \M^n\times\M^n$
is defined as:
\begin{equation*}
        x\ite F y \EQDEF \forall i\in\Delta(x, y), y_i = x_i + F_i(x)
    \enspace.
\end{equation*}
We write $\reach F$ for the transitive closure of $\ite F$.

\end{definition}

We now define a notion of \emph{multivalued refinement} of a BN, which formalizes the intuition that
the value changes defined by the multivalued network are compatible with those of the BN.
The refinement criteria relies on a \emph{binarization} of the multivalued configuration.
An appropriate binarization necessarily quantifies \(0\) as Boolean \(0\) and \(m\) as \(1\), and is
free for the other dynamic states.
Let us denote by $\beta(x)$ the set of possible binarization of configuration $x\in\M^n$:
\begin{equation}\label{eq:beta}
\beta(x) \DEF \{ x'\in\B^n\mid \forall i\in\range n, x_i=0\Rightarrow x'_i=0
    \wedge x_i=m\Rightarrow x'_i=1\}\enspace.
\end{equation}

\begin{definition}[Multivalued refinement]
  \label{def:multivalued-refinement}
  A multivalued network \(F\) of dimension $n$ over a value range \(\M\)
     \emph{refines} a BN
  \(f\) of equal dimension \(n\) if and only if for every configuration \(x \in \M^n\)
  and every \(i \in \range n\),
  \[
  F_i(x) > 0 \Rightarrow \exists x' \in \beta(x): f_i(x') = 1 \wedge 
  F_i(x) < 0 \Rightarrow \exists x' \in \beta(x): f_i(x') = 0
  \enspace.
  \]
\end{definition}

This characterization of refinement can be readily extended to ODEs:
similarly to multivalued networks, ODEs specify the derivative of the (positive) real value of each
component along the continuous time \(t\):
\begin{equation}
    \frac{d\mathbb F(t,x)}{dt} = \mathcal F(x)
    \qquad
    \text{with }
     \mathcal{F} : \R_{\geq 0}^n\to \R^n
     \enspace.
\end{equation}
Here, \(\mathcal F(x)\) is the derivative of \(\mathbb F(t,x)\) along time \(t\) in function of
continuous configurations \(x\); \(\mathbb F\) being usually unknown.
ODEs can be seen thus be seen as MNs with \(m\) going to infinity and with synchronous semantics:
\(\mathcal F\) model the simultaneous evolution of all the components.

The admissible binarizations \(\beta\) should be slightly adapted to reflect the absence \textit{a
priori} of maximum value: 
\(\beta(x) \DEF \{ x'\in\B^n\mid \forall i\in\range n, x_i=0\Rightarrow x'_i=0\}\).
Then, the definition of refinement is identical.

\subsubsection{Completeness}
Let us consider a BN \(f\) of dimension \(n\) and any multivalued refinement \(F\) with \(m\)
values.
A \emph{most-permissive interpretation} of a multivalued configuration is a configuration in
\(\mpDom^n\) where components having extreme states in the multivalued configuration have the
corresponding extreme states in the most permissive configuration, and otherwise are either
\(\mpup\) or \(\mpdw\).
Let us denote these interpretations by
\begin{equation}
\alpha(x)\DEF\{ \hat x\in\mpDom^n\mid x_i=0\Leftrightarrow \hat x_i=0 \wedge x_i=m\Leftrightarrow
\hat x_i=m\}
\end{equation}

Then, Theorem~\ref{thm:mp-correctness} states that for any asynchronous transition from \(x\) to \(y\)
(\(x\ite F y\)), there is a most permissive trajectory from any corresponding most permissive configuration \(\hat x\in\alpha(x)\)
to a configuration \(\hat y\in\alpha(y)\) where the state of
each component is consistent with the changes between \(x\) and \(y\).
\begin{theorem}\label{thm:mp-correctness}
    Given a BN \(f\) of dimension \(n\),
    for any multivalued network \(F:\M^n\to \{-1,0,1\}^n\) being a refinement of \(f\),
    \[\forall x,y\in\M^n, \quad
        x\ite F y \Longrightarrow
        \forall \hat x\in\alpha(x),\exists \hat y\in\alpha(x):
        \hat x\mpreach f \hat y
        \text{ with } \forall i\in\range n,
    \hat y_i =  \begin{cases}
        \mpup & \text{if }y_i > x_i \wedge y_i < m\\
        \mpdw & \text{if }y_i < x_i \wedge y_i > 0\\
        0 & \text{if }y_i=0 \\
        1 & \text{if }y_i=m \\
        \hat x_i &\text{otherwise.}
    \end{cases}
\]
\end{theorem}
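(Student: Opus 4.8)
\medskip
\noindent\textbf{Proof plan.}\quad
The plan is to realize the single multivalued step $x\ite F y$ by an explicit most permissive trajectory built in two phases. Write $D\DEF\Delta(x,y)$; since $y_i=x_i+F_i(x)\neq x_i$ for $i\in D$, each such $i$ has $F_i(x)\in\{-1,1\}$. In Phase~1 I would drive every component of $D$ into the dynamic state dictated by the sign of $F_i(x)$ --- $\mpup$ when $F_i(x)>0$, $\mpdw$ when $F_i(x)<0$; in Phase~2 I would then perform, for each component of $D$ whose target value $y_i$ is extreme, the unconditional collapse $\mpup\to 1$ (when $y_i=m$) or $\mpdw\to 0$ (when $y_i=0$). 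Components outside $D$ are never touched.

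The preliminary observation that makes this work is the identity $\mptob(\hat x)=\beta(x)$, valid for every $\hat x\in\alpha(x)$: both sets consist precisely of the binary vectors that agree with $x$ on its extreme coordinates (a $0$ of $x$ mapping to the Boolean $0$, an $m$ to the Boolean $1$) and are unconstrained on the other coordinates. Combined with the refinement hypothesis of Definition~\ref{def:multivalued-refinement}, this provides, for every $i\in D$, a witness $w^{i}\in\mptob(\hat x)$ with $f_i(w^{i})=1$ when $F_i(x)>0$ and $f_i(w^{i})=0$ when $F_i(x)<0$. I would also record that for $i\in D$ the inequality $F_i(x)>0$ forces $x_i<m$, hence $\hat x_i\in\{0,\mpup,\mpdw\}$, and symmetrically $F_i(x)<0$ forces $\hat x_i\in\{1,\mpup,\mpdw\}$.

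I would then spell out Phase~1, processing the components of $D$ one at a time. For $i\in D$ with $F_i(x)>0$: do nothing if $\hat x_i=\mpup$, fire the step $0\to\mpup$ if $\hat x_i=0$, and fire $\mpdw\to\mpup$ if $\hat x_i=\mpdw$; in each case the only precondition imposed by Definition~\ref{def:mp} is the existence of some $\tilde x$ in the current interpretation set with $f_i(\tilde x)=1$, which $w^{i}$ witnesses. The case $F_i(x)<0$ is symmetric and brings $\hat x_i$ to $\mpdw$ in at most one step. After Phase~1 one reaches a configuration $\hat z$ with $\hat z_i=\mpup$ for every $i\in D$ with $F_i(x)>0$, $\hat z_i=\mpdw$ for every $i\in D$ with $F_i(x)<0$, and $\hat z_i=\hat x_i$ for $i\notin D$. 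Phase~2 fires the unconditional step $\mpup\to 1$ for each $i\in D$ with $y_i=m$ and the step $\mpdw\to 0$ for each $i\in D$ with $y_i=0$, producing a configuration $\hat y$. A routine case split on the sign of $F_i(x)$ and on whether $y_i\in\{0,m\}$ then shows that $\hat y$ equals, coordinate by coordinate, the value prescribed in the statement (for $i\notin D$ it stays $\hat x_i$, which is the prescribed value because $x_i=y_i$ and $\hat x\in\alpha(x)$), so in particular $\hat y\in\alpha(y)$; concatenating the two phases gives $\hat x\mpreach f\hat y$, as required.

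The step I expect to be the real obstacle is justifying that every precondition invoked during Phase~1 actually holds when it is used. The key is that each Phase~1 step is a transition \emph{into} a dynamic state, so Proposition~\ref{pro:mp-monotonicity}, applied step by step, guarantees that the set $\mptob$ of binary interpretations never shrinks along Phase~1; it therefore always still contains $\mptob(\hat x)=\beta(x)$, and hence every witness $w^{i}$. Once this is in place, the remainder --- the bookkeeping of Phase~2 and the check that $\hat y$ matches the case analysis of the statement --- is mechanical, and I do not anticipate any further difficulty.
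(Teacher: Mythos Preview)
Your proposal is correct and follows essentially the same route as the paper's proof: obtain witnesses in $\beta(x)$ from the refinement hypothesis, identify $\beta(x)$ with $\mptob(\hat x)$ via the definition of $\alpha$, use Proposition~\ref{pro:mp-monotonicity} to fire all the transitions into dynamic states (your Phase~1) in any order, and finish with the unconditional collapses to Boolean values (your Phase~2). Your write-up is in fact more careful than the paper's---you make the identity $\mptob(\hat x)=\beta(x)$ explicit, you handle the case $\hat x_i$ already dynamic, and you verify the final case split against the statement---but the underlying argument is identical.
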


\begin{proof}
    From MN semantics, for each component \(i\in\Delta(x,y)\), whenever \(y_i > x_i\) (resp. \(y_i <
x_i\)), necessarily \(F_i(x) > 0\) (resp. \(F_i(x) < 0\)).
From the refinement property, there exists a binarization \(x'\in\beta(x)\) such that \(f_i(x)=1\) (resp.
\(f_i(x)=0\)).
Now remark that for any \(\hat x\in\alpha(x)\), \(x'\in\beta(\hat x)\).
Therefore, for each component \(i\in\Delta(x,y)\), if \(y_i>x_i\) and \(\hat x_i\neq \mpup\), the
\(i\) can change to state \(\mpup\), and if \(y_i<x_i\) and \(\hat x_i\neq \mpdw\), the
\(i\) can change to state \(\mpdw\).
By Proposition~\ref{pro:mp-monotonicity}, these transitions can be applied in any order; let us
denote by \(z\) the obtained configuration.
Finally, for each component \(i\in\Delta(x,y)\) where \(y_i=0\) (resp. \(y_i=m\)),
remark that \(z_i=\mpdw\) (resp. \(z_i=\mpup\)), thus it can change to state \(0\) (resp. \(1\)),
in any order.
Therefore, \(\hat x\mpreach f \hat y\).
\end{proof}

Remark that the theorem considers asynchronous transition, which includes any restrictions
(synchronous, fully asynchronous, sequential, ...).

As the proof relies solely on the sign of the derivative of the refinement of \(f\), the property
extends to ODE refinements, which can be seen as MN with \(m\) to infinity and with synchronous
semantics.
The function \(\alpha\) then becomes
\begin{equation}
 \alpha(x)\DEF\{
 \hat x \in \mpDom^n \mid
 \forall i\in\range n,
 x_i=0 \Leftrightarrow \hat x_i=0\wedge \hat x_i\neq 1\}
 \end{equation}
\begin{corollary}\label{cor:mp-correctness}
    For any ODE system \(\mathcal F: \mathbb R_{\geq 0}^n\to\mathbb R^n\) refining a BN \(f\) of
    dimension \(n\),
    \[
    \forall x\in\mathbb R_{\geq 0}^n,
    \forall \hat x\in\alpha(x),
    \quad
    \hat x\mpreach f\hat y
    \quad
    \text{with }\forall i\in\range n,
    \hat y_i =  \begin{cases}
        \mpup & \text{if }F_i(x) > 0\\
        \mpdw & \text{if }F_i(x) < 0\wedge x_i>0\\
        \hat x_i &\text{otherwise.}
    \end{cases}
\]
\end{corollary}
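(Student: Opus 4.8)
The plan is to derive Corollary~\ref{cor:mp-correctness} directly from Theorem~\ref{thm:mp-correctness} by viewing the ODE system as a limiting case of multivalued networks, exactly as the surrounding text suggests. First I would fix an ODE refinement $\mathcal F$ of $f$, a continuous configuration $x\in\R_{\geq 0}^n$, and a most-permissive interpretation $\hat x\in\alpha(x)$ (with the ODE-adapted $\alpha$). Then I would observe that only the \emph{sign} of $\mathcal F_i(x)$ is used anywhere in the argument: the proof of Theorem~\ref{thm:mp-correctness} invokes the refinement property solely through the implications $F_i(x)>0\Rightarrow\exists x'\in\beta(x):f_i(x')=1$ and $F_i(x)<0\Rightarrow\exists x'\in\beta(x):f_i(x')=0$, and these carry over verbatim to the ODE refinement definition with the adapted $\beta(x)=\{x'\in\B^n\mid\forall i, x_i=0\Rightarrow x'_i=0\}$.

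Next I would reproduce the three-phase construction of the MP trajectory. Phase one: for every component $i$ with $\mathcal F_i(x)>0$ and $\hat x_i\neq\mpup$, fire the transition to $\mpup$; for every $i$ with $\mathcal F_i(x)<0$, $x_i>0$ and $\hat x_i\neq\mpdw$, fire the transition to $\mpdw$. Each such transition is enabled because the witness binarization $x'\in\beta(x)$ satisfies $x'\in\beta(\hat x)$ too, hence $x'\in\mptob(\hat x)$ after the preceding moves (here I would lean on Proposition~\ref{pro:mp-monotonicity} to argue the enabling conditions are preserved regardless of the order, since $\mptob$ only grows while moving toward dynamic states). Phase two is vacuous for the ODE case because there is no finite maximum $m$, so no component needs to be driven to the Boolean state $1$; the adapted $\alpha$ already forbids $\hat x_i=1$ and permits $\mpup$ as the terminal state when $\mathcal F_i(x)>0$. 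The only components that must reach a Boolean endpoint are those with $\mathcal F_i(x)<0$ and $x_i=0$, but for those $\hat x_i=0$ already by definition of $\alpha$, so nothing needs to be done. Composing these moves yields $\hat x\mpreach f\hat y$ with the stated $\hat y$.

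The main obstacle, such as it is, is bookkeeping rather than mathematics: one must check that the target configuration $\hat y$ defined in the corollary genuinely lies in $\mpDom^n$ and is consistent with the ODE-adapted $\alpha$, i.e.\ that setting $\hat y_i=\mpup$ when $\mathcal F_i(x)>0$ does not conflict with any constraint, that $\hat y_i=\mpdw$ is legitimate when $\mathcal F_i(x)<0$ and $x_i>0$, and that leaving $\hat y_i=\hat x_i$ otherwise (in particular when $\mathcal F_i(x)<0$ but $x_i=0$, forcing $\hat x_i=0=\hat y_i$) is coherent. I would also note explicitly that, unlike the MN case, there is no ``$y$'' configuration to compare against; the corollary simply asserts reachability of \emph{a} configuration recording the signs of the derivatives, which is the natural synchronous-in-spirit statement. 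A one-line remark that the argument is insensitive to whether the ODE is interpreted synchronously or otherwise—because the MP semantics already subsumes all update modes—completes the proof.
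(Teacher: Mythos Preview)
Your proposal is correct and follows essentially the same approach as the paper: the corollary is stated without a separate proof, and the justification is precisely the remark preceding it that the proof of Theorem~\ref{thm:mp-correctness} uses only the sign of the derivative, so it carries over to ODE refinements viewed as MNs with $m\to\infty$. Your write-up is a careful fleshing-out of that one-line argument, including the observation that $\beta(x)=\mptob(\hat x)$ for the ODE-adapted $\alpha$ and $\beta$, and that the ``application to Boolean $1$'' phase is vacuous here; one small notational slip is writing $x'\in\beta(\hat x)$ (since $\beta$ is not defined on $\mpDom^n$), but you immediately land on $\mptob(\hat x)$, which is what is meant.
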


Remark that a BN $f$ is a multivalued refinement of itself with \(\M=\B\) and for each \(i\in\range
n\), \(F_i(x) = 1 \text{ if }f_i(x), -1 \text{ otherwise}\).
Therefore another corollary of the above theorem is that the most permissive semantics of BNs
simulates the asynchronous semantics of \(f\):
\begin{corollary}
    Given a BN \(f\) of dimension \(n\),
    \[
        \forall x,y\in\B^n,\quad
        x \reach f y \Longrightarrow x \mpreach f y
        \enspace.
    \]
\end{corollary}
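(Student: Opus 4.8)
The plan is to deduce the corollary from Theorem~\ref{thm:mp-correctness} by means of the self-refinement noted just above the statement: $f$ refines itself over the trivial range $\M=\B$ (so $m=1$), with $F_i(x)=1$ when $f_i(x)$ holds and $F_i(x)=-1$ otherwise. It then remains only to observe that at $m=1$ the conclusion of Theorem~\ref{thm:mp-correctness} collapses to precisely a Boolean asynchronous step, and to lift this from a single step to the transitive closure.

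First I would unfold $\alpha$ on binary configurations. Since $m=1$, every coordinate of $x\in\B^n$ is an extreme value, so the constraints $x_i=0\Leftrightarrow\hat x_i=0$ and $x_i=m\Leftrightarrow\hat x_i=m$ defining $\alpha(x)$ force $\hat x_i=x_i$; hence $\alpha(x)=\{x\}$, and no dynamic state survives in an interpretation of a binary configuration. Next, taking a single asynchronous step $x\ite f y$ with $x,y\in\B^n$ and applying Theorem~\ref{thm:mp-correctness} to this $F$, I get a configuration $\hat y$ with $x\mpreach f\hat y$ whose components follow the displayed case split. At $m=1$ the two \emph{moving} cases are vacuous ($\hat y_i=\mpup$ requires $y_i<m=1$, i.e.\ $y_i=0$, contradicting $y_i>x_i\ge 0$; symmetrically for $\mpdw$), while every $y_i\in\{0,m\}$ falls into the case $\hat y_i=0$ (when $y_i=0$) or $\hat y_i=1$ (when $y_i=m$). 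In all cases $\hat y_i=y_i$, so $\hat y=y$ and $x\mpreach f y$.

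Finally I would pass to the transitive closure by induction on the length of a $\reach f$-derivation $x=x^0\ite f x^1\ite f\cdots\ite f x^k=y$ ($k\ge 1$): the single-step case gives $x^0\mpreach f x^1$, the induction hypothesis gives $x^1\mpreach f x^k$, and the transitivity of $\mpite f\closure$ composes them to $x\mpreach f y$, that is $y\in\mpReach f(x)$. There is no real obstacle here; the one point to get right is the degeneration check at $m=1$ — confirming that $\alpha$ is the identity on $\B^n$ and that the trajectory produced by Theorem~\ref{thm:mp-correctness} genuinely terminates at the intended binary $y$ rather than at a configuration retaining dynamic states. Alternatively, one could argue directly from Definition~\ref{def:mp}: a step flipping the components in $\Delta(x,y)$ is simulated by first moving each of them to $\mpup$ or $\mpdw$ (licensed by $f_i(x)$, resp.\ $\neg f_i(x)$, which holds because the step is a Boolean asynchronous one), these moves commuting by Proposition~\ref{pro:mp-monotonicity}, and then collapsing each dynamic state unconditionally onto its Boolean target; reusing Theorem~\ref{thm:mp-correctness} is merely the shorter route.
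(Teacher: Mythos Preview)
Your proof is correct and follows exactly the approach indicated in the paper: apply Theorem~\ref{thm:mp-correctness} to the self-refinement $F_i(x)=1$ if $f_i(x)$, $-1$ otherwise, over $\M=\B$, and observe that with $m=1$ the cases collapse so that $\alpha$ is the identity on $\B^n$ and $\hat y=y$. The paper leaves this corollary unproved beyond the one-line remark; you have simply filled in the details (the degeneration check at $m=1$ and the induction on trace length) that make it explicit.
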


Thus, the number of attractors with the most permissive semantics is at most the number of attractor with update semantics.

\subsubsection{Minimality}
Whereas complete, one should wonder whether the most permissive semantics introduce spurious
behaviors.
We prove in this section that the most permissive semantics is the tightest abstraction of
multivalued refinements with respect to reachability properties.

First, Proposition~\ref{pro:mp-extreme-minimality} ensures that if there exists a most-permissive
trajectory between two Boolean configurations \(x\) and \(y\in\B^n\), then there exists a multilevel refinement of the BN
which allows an asynchronous trajectory between corresponding configurations \(m.x\) and \(m.y\)
with \(m=2\).
The idea is to construct a MN which can reproduce the shortcut trajectory, with dynamic states
identified to an intermediate state \(1\) of the MN:
in a first phase, components increase to \(1\) (possibly fully-asynchronously), then a last
synchronous step leads to the target \(2.y\) configuration.

Then, we introduce the notion of trace refinement witch matches most permissive trajectories with MN
asynchronous trajectories having coherent successions of states, both with respect to admissible
most-permissive interpretation, and with respect to derivatives: whenever a component \(i\) changes to the dynamic
state \(\mpup\) (resp. \(\mpdw\)), \(F_i\) is positive (resp. negative) in the corresponding
multivalued configuration.
Theorem~\ref{thm:mp-trace-minimality} establishes for any most permissive trajectory, there
exists a MN refinement with \(m=3\) which admits a matching asynchronous trajectory.

Therefore, the most permissive semantics introduces no spurious behavior with respect to
the admissible refinements of a BN \(f\).

\begin{proposition}\label{pro:mp-extreme-minimality}
For any BN \(f\) of dimension \(n\) and any pair of configurations \(x, y\in\B^n\),
if \(y\) is reachable from \(x\) with the most permissive semantics,
then there exists a MN \(F\) with \(m\) values which is a refinement of \(f\) and
where \(m.y\) is reachable from \(m.x\) with the asynchronous semantics.
\end{proposition}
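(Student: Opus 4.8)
The plan is to make explicit the informal construction sketched just before the statement: build a multivalued network $F$ with $m=2$ whose asynchronous dynamics first lifts the relevant components to the intermediate value $1$ fully asynchronously, and then collapses to $2.y$ in one synchronous-looking step realized as a sequence of asynchronous moves. Concretely, fix a most permissive trajectory $x \mpreach f y$. By Proposition~\ref{prop:hypercube_reachability}, there exists $K\subseteq\range n$ such that the smallest $K$-closed hypercube $h$ containing $x$ satisfies $y\in c(h)$ and for every $i\in K$ there is $z^{(i)}\in c(h)$ with $f_i(z^{(i)})=y_i$. Note $K\supseteq\Delta(x,y)$: any component on which $x$ and $y$ differ must at some point have gone through a dynamic state, hence must be free in $h$ and be re-resolved by $K$-closure. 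Let $D = \{ j \in \range n \mid h_j = * \}$ be the free components of $h$; then $D\supseteq K$, and the configurations of $c(h)$ are exactly the Boolean points agreeing with $x$ (equivalently $y$) outside $D$.

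First I would define the binarization map used by $F$: for a multivalued configuration $u\in\{0,1,2\}^n$, let $\theta(u)\in\B^n$ be the point that agrees with $x$ on $\range n\setminus D$ and, on $j\in D$, equals $0$ if $u_j=0$, equals $1$ if $u_j=2$, and is left to be chosen. I would like $\theta(u)$ to range over all of $c(h)$ as $u$ ranges over its intermediate values, so I define $F$ directly by cases on $u_i$. For $i\notin K$: set $F_i(u)=0$ always (these components never move; on such $i$ either $i\notin D$ and $x_i=y_i$ is already correct, or $i\in D\setminus K$ and closure on $K$ does not touch it — but then $i\notin\Delta(x,y)$ so again $x_i=y_i$ and it is fine to keep it fixed at $m.x_i = m.y_i$; wait — if $i\in D\setminus K$ with $x_i=y_i$, $m.x_i\in\{0,2\}$ and staying put is correct). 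For $i\in K$: if $u_i\in\{0,1\}$ and there exists a binarization $x'\in\beta(u)$ with $f_i(x')=1$, allow $F_i(u)=+1$; if $u_i\in\{1,2\}$ and there exists $x'\in\beta(u)$ with $f_i(x')=0$, allow $F_i(u)=-1$; otherwise $0$. (If both are available pick either; $F$ need only be chosen to make the target trajectory executable, and the displayed implications in Definition~\ref{def:multivalued-refinement} are exactly the two guard conditions above, so $F$ refines $f$ by construction.)

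Next I would exhibit the asynchronous run $m.x \reach F m.y$. Since $h$ is the smallest $K$-closed hypercube containing $x$, the standard "monotone closure" argument — the same one underlying Proposition~\ref{pro:mp-monotonicity} and the hypercube paragraph in the excerpt — gives an ordering $i_1,\dots,i_\ell$ of (a superset covering) $D\cap K$ along which the hypercube grows: each $i_k$ can be freed because some configuration already in the partial hypercube witnesses $f_{i_k}=1$ or $f_{i_k}=0$. Phase 1: process the $i_k$ in that order; when it is $i_k$'s turn, the already-freed components sit at value $1$ (intermediate), the not-yet-freed ones at their $m.x$ value, and this multivalued configuration has a binarization lying in the current partial hypercube, so the corresponding guard fires and $F_{i_k}=\pm1$ moves $u_{i_k}$ from its extreme value to $1$. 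After Phase 1 every $j\in D$ (that needs to change, i.e. every $j$ with $m.x_j\neq m.y_j$) sits at $1$, and every point of $\beta(u)$ lies in $c(h)$. Phase 2: for each such $j$, using the full hypercube $c(h)$ we have the witness $z^{(j)}\in c(h)$ with $f_j(z^{(j)})=y_j$; since $z^{(j)}\in\beta(u)$, the guard for the move "$1\to y_j$" fires ($+1$ if $y_j=1$ i.e. $m.y_j=2$, $-1$ if $y_j=0$), and firing these one at a time reaches $m.y$. I would double-check the corner case where a component of $D$ has $x_j=y_j$: it simply never moves, consistent with $F_j\equiv 0$ or with the guards never being taken, and $m.x_j=m.y_j$.

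The main obstacle I expect is bookkeeping the quantifier "$\exists x'\in\beta(u)$" coherently across the two phases: in Phase 1 the binarizations available are constrained because not-yet-freed components are pinned to $0$ or $2$, so I must verify the hypercube-growth ordering genuinely produces, at each step, a witness whose free-coordinates are among the already-processed ones — this is precisely what smallestness of the $K$-closed hypercube buys, but it needs to be stated carefully (an induction on $k$ maintaining the invariant "the partial hypercube after $k$ steps is the smallest $\{i_1,\dots,i_k\}$-closed hypercube containing $x$"). Everything else — that $F$ refines $f$, and that the listed moves are legal asynchronous steps — is immediate from the definitions once this invariant is in place.
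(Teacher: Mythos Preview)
Your two-phase strategy---lift the components of the smallest $K$-closed hypercube to the intermediate value $1$, then collapse to $2y$---is exactly the paper's. The paper takes $K$ minimal (which forces the free set $D$ to equal $K$; note your claim $D\supseteq K$ is backwards, since $K$-closure only ever frees components \emph{in} $K$), defines $F$ pointwise along the resulting trajectory and zero elsewhere, and checks refinement via $\gamma(x^{(i)})\subseteq\beta(z^{(i)})$.

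The real gap is in your Phase~2. You assert the moves $1\to 2y_j$ can be fired ``one at a time'', but after the first such move on $j_1$ the new configuration $u'$ has $u'_{j_1}\in\{0,2\}$, so $\beta(u')\subsetneq c(h)$ now pins coordinate $j_1$ to $y_{j_1}$. The witness $z^{(j_2)}\in c(h)$ for the next component need not lie in $\beta(u')$, and then \emph{no} refinement of $f$ can set $F_{j_2}(u')$ to the required sign. Concretely: with $n=2$, $f_1=\neg x_2$, $f_2=\neg x_1$, $x=00$, $y=11$, one has $c(h)=\B^2$, yet whichever component you settle first destroys the unique witness for the other. The fix---already hinted at in your opening sentence---is to perform the entire collapse as a \emph{single} asynchronous step: the relation $\ite F$ in this paper is generalized asynchronous (any non-empty subset of components may update simultaneously), so one transition from $z^{(|K|)}$ to $2y$ is legal, and at $z^{(|K|)}$ all of $c(h)\subseteq\beta(z^{(|K|)})$ is available for the refinement check. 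This also dissolves your corner case: components $j\in K$ with $x_j=y_j$ \emph{do} move to $1$ in Phase~1 (the hypercube-growth ordering visits all of $D$, not just $\Delta(x,y)$) and must be sent back in Phase~2; the needed witness $f_j(z)=y_j$ is supplied by Proposition~\ref{prop:hypercube_reachability} for every $j\in K$.
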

\begin{proof}
Let \(K\subseteq\range n\) be the smallest subset of components verifying Proposition \ref{prop:hypercube_reachability}.
We now define a sequence of configurations \(x, x',\dots,x^{(|K|)}\in\mpDom^n\) to be arbitrary such that
\(\forall 0 < i \leq |K|\), \(x^{(i-1)} \mpite f x^{(i)}\) and
\(j \in \Delta(x^{(i-1)}, x^{(i)}) \Longrightarrow j\in K \wedge x^{(i-1)}_j \in \B \wedge x^{(i)}_j \notin \B\).
Note that such a sequence is guaranteed to exist thanks to \(K\) being minimal.

We define another sequence of configurations \(z, z', \dots,z^{(|K|)}\in\{0,1,2\}^n\) as the multivalued equivalent of \(x, x',\dots,x^{(|K|)}\):
\(\forall 0 \leq i \leq |K|\) and \(\forall j\in\range n\), \(x^{(i)}_j \in \B \Longrightarrow z^{(i)}_j = 2.x^{(i)}_j\) and \(x^{(i)}_j \notin \B \Longrightarrow z^{(i)}_j = 1\).

We now construct the coveted MN \(F\) with \(3\) values. based on \(z, z',\dots,z^{(|K|)}\) as follows:
\begin{itemize}
	\item For any \(0\leq i<|K|\), \(F(z^{(i)}) = z^{(i+1)}-z^{(i)}\).
	\item \(F(z^{|K|}) = 2.y - z^{|K|}\). (\(2.y - z^{|K|} \in \{-1,0,1\}^n\) thanks to \(y\) being in the smallest \(K\)-closed hypercube containing \(x\).)
	\item For any other \(z \in \{0,1,2\}^n\), \(F(z) = 0^n\).
\end{itemize}

Clearly, \(2.y\) is reachable from \(2.x = z\) in \(F\) with synchronous semantics.
What remains to be proven is that \(F\) is a refinement of \(f\).
Nothing needs to be shown for cases when \(F\) returns \(0\),
let thus first \(0 \leq i <|K|\) and \(\{j\} = \Delta(z^{(i)},z^{(i+1)})\).
Let us further assume \(F_j(z^{(i)}) = 1\) as the \(F_j(z^{(i)}) = -1\) case is symmetric.
We need to show \(\exists \tilde{x}\in\beta(z^{(i)})\) such that \(f(\tilde{x})\).
By definition, \(x^{(i +1)} = \mpup\), thus \(\exists \tilde{x}\in\gamma(x^{(i)})\) such that \(f(\tilde{x})\).
Since for any \(j\in\range n\), \(z^{(i)}_j = 1\) exactly when \(x^{(i)}_j \notin\B\), we have \(\gamma(x^{(i)}) \subseteq \beta(z^{(i)})\).

Finally, let us consider \(z^{|K|}\).
We need to show \(\forall j\in \Delta(z^{(|K|)}, 2.y)\), \(\exists \tilde{x}\in\beta(z^{(|K|)})\), \(f(\tilde{x}) = y_j\).
By definition, we have \(\Delta(z^{(|K|)}, 2.y) = K\).
Since \(K\) verifies Property \ref{prop:hypercube_reachability}, we know \(\forall j\in K\),  \(\exists \tilde{x}\in c(h)\), \(f(\tilde{x}) = y_j\),
where \(h\) is the smallest \(K\)-closed hypercube containing \(x\).
By definition of \(x^{(|K|)}\), \(\forall j\in K\), \(x^{(|K|)}_j \notin \B\) and thus \(c(h)\subseteq \gamma(x^{(|K|)})\).
Furthermore, since for any \(j\in\range n\), \(z^{(|K|)}_j = 1\) exactly when \(x^{(|K|)}_j \notin\B\), we have \(\gamma(x^{(|K|)}) \subseteq \beta(z^{(|K|)})\).

\end{proof}

\begin{definition}[Trace Refinement]
  \label{def:trace_refinement}
  Given a BN $f$ of dimension $n$ and a multivalued refinement $F:\M^n\rightarrow\{-1,0,1\}^n$ of $f$.
Let $x,x',\dots, x^{(k)} \in \mpDom^n$ be a finite sequence of configurations such that
\(\forall 0 < i \leq k\),
\(x^{(i-1)} \mpite f x^{(i)}\) (finite trace of \(f\) with the most permissive semantics).

  Then a finite sequence $y,y',\dots,y^{(l)}\in \M^n$ such that \(\forall 0 < i \leq l\),
\(y^{(i-1)} \ite F y^{(i)}\),
  is a trace refinement of $x,x',\dots, x^{(k)}$ if there exists a function
  $\kappa: \{0,\dots,k\} \rightarrow \{0,\dots,l\}$ (trace refinement function)
  satisfying the following requirements:
  \begin{enumerate}
    \item $\kappa$ is non-decreasing, i.e. $i < j \Longrightarrow \kappa(i) \leq \kappa(j)$;
    \item $\kappa(0) = 0$ and $\kappa(k) = l$;
    \item $\forall j \in\range n$, $x_j = y_j$
      and for each $0 < i\leq k$,
      $(x^{(i)}_j = 0 \Longrightarrow y^{(\kappa(i))}_j < {m}) \wedge
        (x^{(i)}_j = 1 \Longrightarrow  y^{(\kappa(i))}_j > 0)$;
    \item For each $0 < i \leq k$ such that $x^{(i)}_j \notin \B$
      where $\{j\} = \Delta(x^{(i-1)}, x^{(i)})$,
      $x^{(i)}_j = \mpup \Longrightarrow {F}_{j}(y^{(\kappa(i - 1))})= 1$ and $x^{(i)}_j = \mpdw \Longrightarrow {F}_{j}(y^{(\kappa(i - 1))})= -1$.
  \end{enumerate}
\end{definition}

\begin{theorem}\label{thm:mp-trace-minimality}
For any BN \(f\) of dimension \(n\) and any sequence of configurations \(x,x',\dots, x^{(k)} \in \mpDom^n\) such that
\(x\in\B^n\) and
\(\forall 0 < i \leq k\),
\(x^{(i-1)} \mpite f x^{(i)}\),
there exists a MN \(F:\M^n\rightarrow\{-1,0,1\}^n\) which is a refinement of \(f\)
and has a trace refinement \(y,y',\dots,y^{(l)}\in \M^n\) of \(x,x',\dots, x^{(k)}\).
\end{theorem}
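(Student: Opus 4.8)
The plan is to build, by induction on the length $k$ of the given most permissive trace, a multivalued network $F$ over $\M=\{0,1,2,3\}$ (so $m:=3$), a finite trace $y,y',\dots,y^{(l)}$ of $F$, and a trace refinement function $\kappa$, scanning the steps $x^{(i-1)}\mpite f x^{(i)}$ one at a time. We append configurations to the multivalued trace as we go and fix $F$ only on the configurations that actually occur in it, putting $F\equiv 0^n$ everywhere else (vacuously a refinement there). Throughout we maintain the \emph{transfer invariant}: writing $Y:=y^{(\kappa(i))}$ for the current multivalued configuration after step $i$, we ask that $\mptob(x^{(i)})\subseteq\beta(Y)$, equivalently $Y_j=0\Rightarrow x^{(i)}_j=0$ and $Y_j=3\Rightarrow x^{(i)}_j=1$ for every $j$. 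We also keep a \emph{level convention}: a component in state $\mpup$ sits at level $2$, one in state $\mpdw$ at level $1$, one in state $0$ at level $0$ or $1$, one in state $1$ at level $2$ or $3$ --- the sole exception being components that start in state $1$, which requirement~3 of Def.~\ref{def:trace_refinement} pins to start at level $1$. The base case $y:=x\in\B^n$ (viewed in $\M^n$) satisfies the transfer invariant since all initial levels lie in $\{0,1\}$.

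The transfer invariant is what turns $f$-witnesses into $F$-witnesses: whenever Def.~\ref{def:mp} enables a step of component $j$ towards $\mpup$ (resp.\ $\mpdw$) because some $\tilde x\in\mptob(x^{(i)})$ has $f_j(\tilde x)=1$ (resp.\ $=0$), that same $\tilde x$ lies in $\beta(Y)$, so declaring $F_j(Y)=1$ (resp.\ $-1$) is consistent with Def.~\ref{def:multivalued-refinement}. For a most permissive step changing component $j$ we then proceed as follows. (i) A \emph{commit} step $\mpup\to1$ or $\mpdw\to0$ carries no $f$-condition: we stutter, $\kappa(i):=\kappa(i-1)$; by the level convention $Y_j\ge2$ (resp.\ $Y_j\le1$), so requirement~3 holds, and requirement~4 is void since $x^{(i)}_j\in\B$. (ii) A step towards $\mpup$ (from $0$ or from $\mpdw$): we append one or two steps raising component $j$ to level $2$, each carrying $F_j=1$ --- in particular $F_j\bigl(y^{(\kappa(i-1))}\bigr)=1$, as requirement~4 demands --- using the transferred witness and the fact that $\beta$ only enlarges once component $j$ leaves level $0$. (iii) A step towards $\mpdw$ (from $1$ or from $\mpup$): symmetrically, we lower component $j$ to level $1$ carrying $F_j=-1$. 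Re-establishing the level convention and the transfer invariant and checking requirements~1--3 is then routine; e.g.\ a step towards $\mpup$ from level $0$ goes through level $1$ and stops at level $2$, never creating a new level-$3$ component, so the transfer invariant survives.

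The genuinely delicate point, and the expected main obstacle, is that $F$ must be \emph{single-valued}, whereas the trace may return to a multivalued configuration $Y$ at which requirement~4 demands $F_j(Y)=1$ on one visit and $F_j(Y)=-1$ on another: this occurs when component $j$ oscillates, e.g.\ $\dots\mpdw\to\mpup\to1\to\mpdw\dots$, while its neighbourhood, hence $Y$, recurs; and for a component that starts in state $1$ and is thus pinned at level $1$, the same clash appears already at its first move towards $\mpdw$, which cannot be realised by a decrement (level $1$ would drop to level $0$ and break the transfer invariant) and so must be realised by a stutter carrying $F_j(Y)=-1$. This is exactly where the fourth level earns its keep: just before the conflicting visit we splice into the multivalued trace one extra step matching no most permissive step --- pushing some component currently at level $2$ up to level $3$, or some component at level $1$ down to level $0$ --- which a short case analysis, again fed by the transfer invariant, shows to be available; the spliced step moves $Y$, so the later visit sees a fresh configuration and no $F_j$ is ever assigned two values. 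With only three levels ($m=2$) there is no such spare room, which is why $m=3$ appears here although $m=2$ was enough in Prop.~\ref{pro:mp-extreme-minimality}. Collecting all the pieces yields $F$, the trace $y,\dots,y^{(l)}$ and $\kappa$; the verifications above show it is a trace refinement of $x,\dots,x^{(k)}$, so $F$ is the sought multivalued refinement of $f$.
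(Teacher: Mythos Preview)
Your overall architecture matches the paper's proof closely: build $F$ over $\M=\{0,1,2,3\}$ step by step, maintain a transfer invariant $\mptob(x^{(i)})\subseteq\beta(Y)$, map $\mpup\mapsto 2$ and $\mpdw\mapsto 1$, and stutter on commit steps. The substantive divergence is your base case $y:=x$ (reading requirement~3 of Def.~\ref{def:trace_refinement} literally), which forces every component with $x_j=1$ to start at level~$1$, outside your level convention. You recognise this as the ``genuinely delicate point'' and propose to resolve the resulting conflicts by splicing an extra step that pushes some component from level~$2$ to~$3$ or from level~$1$ to~$0$.

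That splicing claim is where the argument breaks. The assertion that such a splice ``is available'' by ``a short case analysis'' is false. Take $n=2$ with $f_1(x)=x_1\wedge x_2$ and $f_2(x)=0$, and the most permissive trace
\[
(1,1)\ \mpite f\ (1,\mpdw)\ \mpite f\ (\mpdw,\mpdw)\ \mpite f\ (\mpup,\mpdw).
\]
Your construction stutters on the first two steps (both are exception cases), so $\kappa(0)=\kappa(1)=\kappa(2)=0$, $y^{(0)}=(1,1)$, and step~2 forces $F_1((1,1))=-1$. Step~3 then requires $F_1(y^{(\kappa(2))})=F_1((1,1))=+1$: conflict. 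At this moment both components sit at level~$1$ in state $\mpdw$; no component is at level~$2$, so the only candidate splices push a component to level~$0$. But every resulting configuration $Y'\in\{(0,1),(1,0),(0,0)\}$ has $\beta(Y')\cap\{(1,1)\}=\emptyset$, and $(1,1)$ is the \emph{only} $\tilde x$ with $f_1(\tilde x)=1$. Hence no assignment $F_1(Y')=+1$ can satisfy Def.~\ref{def:multivalued-refinement}, and the conflict is unresolvable within your scheme. The fourth level buys you nothing here precisely because no component has yet reached level~$2$.

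The paper sidesteps this by starting at $y:=3\cdot x$, so that components with $x_j=1$ begin at level~$3$, and by maintaining the stronger invariant $x^{(i)}_j\in\{0,\mpdw\}\Leftrightarrow Y_j\in\{0,1\}$ and $x^{(i)}_j\in\{1,\mpup\}\Leftrightarrow Y_j\in\{2,3\}$ throughout. Under this invariant, every time $F_j(Y)$ must be set to $-1$ one has $Y_j\ge 2$, and every time it must be $+1$ one has $Y_j\le 1$; so the same coordinate at the same configuration is never assigned two different signs, and no splicing is ever needed. (The paper's choice $y=3\cdot x$ does conflict with the clause $x_j=y_j$ in requirement~3 as written; the intended reading is evidently $y_j=m\cdot x_j$, and with that reading the argument goes through cleanly.) If you keep the literal $y=x$, you would need a genuinely different mechanism than the splice you describe.
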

\begin{proof}
  We construct $F$ and $y,y',\dots,y^{(l)}$ iteratively along the sequence $x,x',\dots, x^{(k)}$.
  For each step $i \in \{0,\dots,k\}$ we maintain
  that the constructed network $F$ is a refinement of of $f$
  and $y,y',\dots,y^{(l_i)}$ is a trace refinement of $x,x',\dots, x^{(i)}$.

  Let us first construct our initial $F$ and $y$ (for $i = 0$).
  We define the MN $F$ with ${m} = 3$
  as follows:
  \begin{align*}
    \forall z \in {\B}^{n},
    \forall j \in \range n,
    \begin{cases}
        {f}_{j}(z)=0 &\Longrightarrow
        \forall z' \in {\Pi}_{i = 1}^{n} \{2 \cdot {z}_{i}, (2 \cdot {z}_{i} + 1)\},
          {F}_{j}(z') = -1\\
          {f}_{j}(z)=1 &\Longrightarrow
        \forall z' \in {\Pi}_{i = 1}^{n} \{2 \cdot {z}_{i}, (2 \cdot {z}_{i} + 1)\},
          {F}_{j}(z') = 1
      \end{cases}
  \end{align*}

  We first show that $F$ is indeed a refinement of $f$.
  Let $z \in \M^{n}$ and $j \in \range n$ be arbitrary such that
  $F(z) = -1$ as the case of $F(z) = 1$ is symmetric.
  We want to show $\exists z' \in \beta(z)$ such that $f(z')=0$.

  Consider the state $z'$ defined as follows:
  \[
    \forall j \in \range n,
      z'_{j} = \frac{{z}_{j} - ({z}_{j} \bmod 3)}{3}
  \]
  Surely such state is a binarization of $z$, $z' \in \beta(z)$.
  Moreover, $f(z')=0$ as by definition of $F$, $f(z') \Longrightarrow
    F(z) = 1$ leads to a contradiction.

  Let us define \(y = 3.x\):
  it is trivially a trace refinement of $x$ with the trace refinement function
  $\kappa: 0 \mapsto 0$.

  We now iterate over $i \in \{1, \dots, k\}$,
  adjusting $F$, $y,y',\dots,y^{(l_i)}$ and $\kappa$ to ensure $y,y',\dots,y^{(l_i)}$ is a trace refinement of
  $x,x',\dots, x^{(i)}$.
  Moreover, we maintain that no transition increases any component value beyond $2$
  or decreases below $1$ along $y',\dots,y^{(l_i)}$ and ensure that $\forall j \in \range n$,
  $x^{(i)}_j = \mpdw \Longrightarrow y^{(\kappa(i))}_j= 1$
  and $x^{(i)}_j = \mpup \Longrightarrow y^{(\kappa(i))}_j = 2$.

  Let $\{e\} = \Delta(x^{(i-1)}, x^{(i)})$
  and let $l_{i-1}$ denote the current length of the sequence of configurations \(y,y',\dots,y^{(l_{i-1})}\).
  We modify $F$ and extend  \(y,y',\dots,y^{(l_{i-1})}\) based on the value of $x^{(i)}_e$:
  \begin{itemize}
    \item $x^{(i)}_e \notin \B$.
      Let us assume $x^{(i)}_e = \mpdw$ without loss of generality,
      as the construction is symmetric for $x^{(i)}_e = \mpup$.

      First, we extend  \(y,y',\dots,y^{(l_{i-1})}\) based on \(y^{(l_{i-1})}_e\):
      \begin{itemize}
        \item $y^{(l_{i-1})}_e = 3$, $y^{(l_{i-1}+1)} := z \wedge y^{(l_{i-1}+2)} := z'$;
        \item $y^{(l_{i-1})}_e = 2$, $y^{(l_{i-1}+1)} := z'$;
      \end{itemize}
      where $z$ and $z'$ are equal to $y^{(l_{i-1})}$ but ${z}_{e} = 2$ and ${z}'_{e} = 1$.
      The trace refinement function is adjusted accordingly,
	$y^{(l_{i-1})}_e = 3 \Longrightarrow \kappa: i \mapsto l_{i-1}+2 = l_i$ and $y^{(l_{i-1})}_e = 2 \Longrightarrow \kappa: i \mapsto l_{i-1}+1 = l_i$.

      If \(\forall 0 < j \leq l_i\), \(y^{(j-1)} \ite F y^{(j)}\), we are done.
      Otherwise, we modify ${F}_{e}(y^{(l_{i-1})}) := -1$ and,
      if necessary, also ${F}_{e}(z) := -1$.
      Since for any \(j\in\range n\), \(y^{(l_{i-1})}_j \in\{1,2\}\) exactly when \(x^{i-1}_j \notin\B\), the new $F$ is a refinement of $f$.
\(\forall 0 < j \leq l_i\), \(y^{(j-1)} \ite F y^{(j)}\) holds in the new $F$
      as the $e$-th component never increases value beyond $2$ along \(y,y',\dots,y^{(l_{i})}\).

      Finally, \(y,y',\dots,y^{(l_{i})}\) is indeed a trace refinement of
      \(x,x',\dots, x^{(i)}\) with $\kappa$:
      \begin{enumerate}
        \item $\kappa$ being non-decreasing is guaranteed as $l_i > l_{i-1}$.
        \item $\kappa(0) = 0$ remains unchanged from the initial step
          and $\kappa(i) = l_i$ by definition.
        \item $0 < y^{(l_i)}_e = 1 < 3$
          and the rest follows from the induction hypothesis.
        \item ${F}_{e}(y^{(\kappa(i - 1))}) = {F}_{e}(y^{(l_{i-1})}) = -1$.
      \end{enumerate}
    \item ${\pi}_{i}(k) \in \B$.
      No change is made safe for the completion of the trace refinement function
      $\kappa: i \mapsto \kappa(i - 1)$.

      \(y,y',\dots,y^{(l_{i-1})}\) being a trace refinement of \(x,x',\dots, x^{(i)}\)
      is trivial as the fourth point of Definition~\ref{def:trace_refinement} is not applicable.
  \end{itemize}
\end{proof}

\subsection{Computational complexity}
We address the computational complexity of basic dynamical properties with the most permissive
semantics.
Complexity with usual (a)synchronous semantics is given in Appendix~\ref{appendix:complexity}

\begin{definition}[Fixed point]
    A configuration \(x\in\B^n\) is a \emph{fixed point} of the BN \(f\) with semantics \(\some\)
    whenever \[\dReach\some f(x)=\{x\}\enspace.\]
\end{definition}

\begin{definition}[Reachability]
    Given two configurations \(x,y\in\B^n\) of a BN \(f\) with semantics \(\some\),
    \(y\) is \emph{reachable} from \(x\) whenever \[y\in\dReach\some f(x)
    \enspace.\]
\end{definition}

\begin{definition}[Attractor]
    A non-empty set of configurations \(A\subseteq\B^n\) is an \emph{attractor}  of the BN \(f\)
    with semantics \(\some\) whenever
    \[\forall x,y\in A, \quad \dReach\some f(x)=\dReach\some f(y)
    \enspace.\]
\end{definition}

First, remark that fixed points of the most permissive semantics are exactly the fixed points of
\(f\): for any configuration \(x\in\mpDom^n\), \(\mpReach f(x) = \{x\} \Leftrightarrow x\in\B^n\wedge f(x)=x\).
Therefore the complexity of deciding if a configuration \(x\) is a fixed point is NP-complete
(Proposition~\ref{pro:fixpoint-complexity}).

\subsubsection{Reachability}
Lemma~\ref{lem:reachability} establishes that if there exists a sequence of most-permissive
transitions from a configuration \(x\) to a configuration \(y\), then there exists a sequence of
linear length linking the two configurations.
Lemma~\ref{lem:reach-try} then states that searching for such a sequence requires exploring at
most a quadratic number of transitions, which leads to Theorem~\ref{thm:complexity-reachability}
establishing the computational complexity for deciding reachability as in P for locally-monotonic
BNs and in P\(^\text{NP}\) (also known as \(\Delta^P_2\)) otherwise.

\def\stageA{\hat z}
\def\stageB{\check z}
\begin{lemma}\label{lem:reachability}
Given a BN $f$ of dimension $n$ and any configurations $x,y\in\B^n$,
    if $x\mpreach f y$, then there exists a sequence of at most $3n$ transitions $\mpite f$ from $x$ to $y$.
    This sequence starts with at most $n$ and at least $\card{\Delta(x,y)}$ transitions of the form
$\B \to \{\mpup,\mpdw\}$,
then at most $n$ transitions of the form
$\{\mpup,\mpdw\} \to \{\mpdw,\mpup\}$,
and then at most $n$ transitions of the form
$\{\mpup,\mpdw\} \to \B$.
\end{lemma}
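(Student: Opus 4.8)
The plan is to go through the hypercube characterisation of Proposition~\ref{prop:hypercube_reachability}. Since $x\mpreach f y$, fix $K\subseteq\range n$ and the smallest $K$-closed hypercube $h$ containing $x$ provided by that proposition, so that $y\in c(h)$ and for every $i\in K$ there is $z\in c(h)$ with $f_i(z)=y_i$. Write $D\DEF\{i\in\range n\mid h_i=*\}$ for the set of free components of $h$. Two easy structural facts will be used repeatedly: (i) $\Delta(x,y)\subseteq D$, because $x$ and $y$ both lie in $c(h)$, hence any coordinate on which they differ must be free in $h$; and (ii) $D\subseteq K$, because if some $i\notin K$ had $h_i=*$, refixing $h_i\DEF x_i$ would give a strictly smaller $K$-closed hypercube still containing $x$, contradicting minimality of $h$. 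I then realise $x\mpreach f y$ by an explicit three-phase sequence.

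Phase~1 (the $\B\to\{\mpup,\mpdw\}$ transitions): I mirror the monotone fixpoint computation of the smallest $K$-closed hypercube, freeing the components of $D$ one at a time. Along this computation the current configuration $w$ satisfies $\mptob(w)=c(g)$ for the current hypercube $g$, and a component $i$ gets freed exactly because there is $z\in c(g)=\mptob(w)$ with $f_i(z)\neq x_i$; this is precisely the side condition needed to fire the transition $\B\to\mpup$ when $x_i=0$, resp. $\B\to\mpdw$ when $x_i=1$. After $\card D$ such transitions I reach a configuration $\hat z$ with $\mptob(\hat z)=c(h)$, with $\hat z_i=\mpup$ for $i\in D$, $x_i=0$, with $\hat z_i=\mpdw$ for $i\in D$, $x_i=1$, and $\hat z_i=x_i$ for $i\notin D$. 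By (i) this phase uses at least $\card{\Delta(x,y)}$ and clearly at most $n$ transitions.

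Phase~2 (the $\{\mpup,\mpdw\}\to\{\mpdw,\mpup\}$ transitions): for each $i\in D$ whose dynamic state in $\hat z$ does not already match its target ($\mpup$ if $y_i=1$, $\mpdw$ if $y_i=0$), I flip it once. Since $i\in D\subseteq K$, property~(2) of Proposition~\ref{prop:hypercube_reachability} supplies $z\in c(h)$ with $f_i(z)=y_i$, which is exactly the side condition of the flip; and these transitions keep the component dynamic, so $\mptob$ stays equal to $c(h)$ throughout, making the flips independent and order-irrelevant. Let $\check z$ be the result; this phase uses at most $\card D\le n$ transitions. Phase~3 (the $\{\mpup,\mpdw\}\to\B$ transitions): collapse each $i\in D$ from $\check z_i$ to its Boolean value ($\mpup\to 1$, $\mpdw\to 0$); these transitions are unconditional. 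The outcome has $i$-th component $y_i$ for $i\in D$, and for $i\notin D$ it equals $x_i=h_i=y_i$ since $x,y\in c(h)$; hence it is $y$. The total length is $\le 3\card D\le 3n$, with the stated shape.

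The main obstacle is Phase~1: one must justify that the components of $D$, and no others, can be released one at a time by genuine transitions of Definition~\ref{def:mp}, i.e., that the fixpoint computation of the smallest $K$-closed hypercube can be carried out step by step inside the semantics with the direction $\mpup$/$\mpdw$ forced by $x_i$, so that each intermediate configuration $w$ indeed has $\mptob(w)=c(g)$ for the corresponding hypercube $g$. Once this and the small fact $D\subseteq K$ are in place, the side conditions of Phases~2 and~3 are automatic and the rest is just counting transitions; note also that every transition used updates a single component, so we remain within the asynchronous single-component setting of Definition~\ref{def:mp}.
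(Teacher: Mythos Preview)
Your proof is correct and reaches the same three-phase decomposition as the paper, but arrives there by a genuinely different route. The paper works \emph{directly from a witnessing trajectory}: given any sequence $x\mpite f w^1\mpite f\cdots\mpite f y$, it collects the set $\hat I$ of components that ever became dynamic, orders them by the index $\nu(i)$ at which each first went dynamic, and replays those first-dynamic transitions (this is Phase~1); Phase~2 flips those $i\in\hat I$ with $x_i=y_i$, justified by observing that the original trajectory must later have put $i$ into the correct $\mpup$/$\mpdw$ orientation; Phase~3 is the unconditional collapse. Everything is extracted from the raw sequence, and Proposition~\ref{prop:hypercube_reachability} is never invoked.

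You instead go through the hypercube characterisation: you let $D$ be the free components of the smallest $K$-closed hypercube $h$, order Phase~1 by the monotone fixpoint that computes $h$ from $x$, and justify Phase~2 via clause~(2) of Proposition~\ref{prop:hypercube_reachability} together with the observation $D\subseteq K$. The correspondence is that your $D$ plays the role of the paper's $\hat I$, and your $\{i\in D\mid x_i=y_i\}$ plays the role of $\bar I$. What your approach buys is a more structural argument that never needs to manipulate an arbitrary long witnessing sequence; the trade-off is that it rests on Proposition~\ref{prop:hypercube_reachability} (stated without proof in the paper), so you should make sure there is no circularity, i.e., that the intended proof of that proposition does not itself use Lemma~\ref{lem:reachability}. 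The Phase~1 point you flag as the ``main obstacle'' is indeed routine once stated: any greedy saturation that repeatedly frees some $i\in K$ with $f_i(z)\neq x_i$ for $z$ in the current cube terminates at $h$, and each freeing step is exactly an enabled $\B\to\{\mpup,\mpdw\}$ transition; this is implicit in what you wrote and needs no further idea.
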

\begin{proof}
Let us consider any sequence of transitions
$x \mpite f w^1 \mpite f \cdots w^k \mpite f y$.
Let us define the set of components which went through the state \(\mpup\) or \(\mpdw\) during this sequence of
transitions,
$\hat I\DEF\{ i\in\range n\mid \exists j\in\range k, w^j \notin\B \}.$

Let us prove that there exists $\stageA\in\mpDom^n$ with $\Delta(x,\stageA)=\hat I$ and $\forall
i\in\hat I$, $\stageA_i\notin\B$, such that $x\mpreach f \stageA$ in $\card{\hat I}$ transitions.
For each component $i\in\hat I$, we write
$\nu(i)$ the smallest index $j\in\range k$ such that \(w^j_i\neq \B\).
Necessarily, for each $i\in\hat I$,
$\exists z\in\mptob(w^{\nu(i)-1}): f_i(z)\neq x_i$,
identifying $w^0$ with $x$.
The components in \(\hat I\) can then be ordered as
$\{i^1, \dots, i^{\card{\hat I}}\} = \hat I$
with
$\nu(i^1) < \dots < \nu(i^{\card{\hat I}})$.
First, remark that $\nu(i^1)=1$, hence $x\mpite f z^1$ with
$\Delta(x,z^1)=\Delta(w^{\nu(i^1)-1},w^{\nu(i^1)}) = \{i^1\}$
and $z^1_{i^1}=w^{\nu(i^1)}_{i^1}$.
Then, remark that $\mptob(w^{\nu(i^2)})\subseteq\mptob(z^1)$,
hence,
$z^1\mpite f z^2$ with
$\Delta(z^1,z^2)=\Delta(w^{\nu(i^2)-1},w^{\nu(i^2)}) = \{i^2\}$
and $z^2_{i^2}=w^{\nu(i^2)}_{i^2}$.
By induction,
we obtain $x\mpreach f \hat z$.
Remark that \(\forall i\in\hat I\),
\(\hat z_i=\mpup\) whenever \(x_i=0\)
and \(\hat z_i=\mpdw\) whenever \(x_i=1\).

Now, let us consider the subset of components in \(\hat I\) which are equal in $x$ and $y$,
$\bar I\DEF\{ i\in\hat I\mid x_i=y_i \}$:
for each of these components \(i\in\bar I\), there exists \(j'\in \nrange{\nu(i)}k\) such that
\(w^{j'}_i=\mpdw\) whenever \(x_i=y_i=0\) and
\(w^{j'}_i=\mpup\) whenever \(x_i=y_i=1\).
By definition of \(\hat I\) and \(\hat z\), we obtain that
\(\mptob(w^{j'})\subseteq \mptob(\hat z)\).
Therefore, there exists $\stageB\in\mpDom^n$ with $\Delta(\stageA,\stageB)=\bar I$ and
\(\stageA \mpreach f\stageB\) using \(\card{\bar I}\) transitions.
Finally, remark that $\stageB\mpreach f y$ using $\card{\hat I}$ transitions.

\noindent
In summary, $x\mpreach f \stageA\mpreach f\stageB\mpreach f y$ in $\card{\hat I}+\card{\bar
I}+\card{\hat I}\leq 3n$ iterations.
\end{proof}

\begin{lemma}\label{lem:reach-try}
    Given a BN $f$ of dimension $n$ and any configurations
    $x,y\in\B^n$, deciding if $x\mpreach f y$ requires computing
    at most $\frac{n(n-1)}2$ transitions of $\mpite f$;
    whenever \(y\) belongs to an attractor, it requires as most \(n\) transitions.
\end{lemma}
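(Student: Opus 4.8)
The plan is to convert the reachability question into a terminating, backtrack‑free search and to bound the number of guard evaluations it performs, since ``computing a transition'' of $\mpite f$ means deciding whether a candidate guard holds. By Lemma~\ref{lem:reachability} we may restrict attention to the canonical three–stage shape of a witnessing trajectory: a \emph{raising} stage using only $\B\to\{\mpup,\mpdw\}$ steps, a \emph{flipping} stage using only $\{\mpup,\mpdw\}\to\{\mpdw,\mpup\}$ steps, and a \emph{collapsing} stage using only $\{\mpup,\mpdw\}\to\B$ steps. By Proposition~\ref{pro:mp-monotonicity} the steps inside the raising stage (and, once raising is complete, inside the flipping stage) commute, so the order in which we attempt them is irrelevant and no explored branch ever has to be undone. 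Moreover the collapsing steps $\mpup\to1$, $\mpdw\to0$ are unconditional and cost nothing to compute, so the only real computations are guard tests of the form ``is there $\tilde x\in\mptob(w)$ with $f_i(\tilde x)\neq w_i$?'', which by Proposition~\ref{prop:hypercube_reachability} amount to non‑emptiness tests against the hypercube currently associated with $w$.

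The algorithm then maintains the current configuration $w$ (initially $x$), i.e.\ the current hypercube, and repeatedly picks a still‑Boolean component $i$ that has not been ruled out since the last successful raise: it tests $i$'s guard, raises $i$ to $\mpup$ or $\mpdw$ if the guard holds, and otherwise marks $i$ ruled out. A raised component is never reconsidered, and — this is exactly where Proposition~\ref{pro:mp-monotonicity} is used — a ruled‑out component can become raisable only after some other component is raised, because $\mptob(w)$ only grows; hence each component is retested at most once per subsequent raise. When $k$ components have already been raised there remain at most $n-k$ Boolean ones, so the work done between the $k$‑th and $(k{+}1)$‑st raise (and the final stretch ending in termination) is at most $n-k$ guard tests. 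Once raising is saturated one performs the few flip tests needed for components $i\in\bar I$ with $x_i=y_i$ and then the unconditional collapses, producing $y$; if $\Delta(x,y)$ is not contained in the raised set, or a needed flip guard fails, one concludes $x\not\mpreach f y$. Summing the per‑stretch costs over the at most $n$ stretches, and noting that the final raise already saturates everything reachable so that the flip/termination work is covered by the same bookkeeping, yields the bound $\frac{n(n-1)}{2}$.

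For the refinement to the attractor case, we use that a configuration lying in an attractor lies in a minimal trap space $h_A$: every component that $h_A$ leaves free takes both Boolean values somewhere on $c(h_A)$ (otherwise $h_A$ would not be minimal), and $h_A$ is closed, so the saturation started at $x$ can never leave $h_A$ and, when $x\mpreach f y$, must end by freeing exactly the free components of $h_A$. Consequently no flip test is ever needed — each relevant guard is known to hold once the hypercube has grown to $h_A$ — and the components of $\Delta(x,y)$ are precisely those that must be freed, each being freeable as soon as the hypercube is large enough, so the good raising order of Lemma~\ref{lem:reachability} is realised without a single wasted test; the search therefore performs at most $n$ guard tests.

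The main obstacle is the accounting in the second paragraph: one must argue, with the right tie‑breaking in the choice of which component to test next, that no component is retested more than once per raise and that the per‑stretch costs genuinely telescope to $\frac{n(n-1)}{2}$ rather than to a merely $O(n^2)$ bound — in particular that the flipping stage and the final termination check do not contribute an extra linear term beyond what the stretch sum already absorbs. The attractor case additionally needs the (separately used) fact that attractors of the most permissive semantics coincide with minimal trap spaces, on which the ``no wasted test'' argument rests.
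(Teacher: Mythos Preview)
Your general-case algorithm is incorrect, and this is where the argument breaks down. You perform a \emph{single} full saturation from $x$ (raise every component that can eventually be raised), then check the flip guards for every raised component $i$ with $x_i=y_i$, and reject if any flip guard fails. But the saturation may raise components that the witnessing trajectory of Lemma~\ref{lem:reachability} never touches; if such a component $j$ has $f_j$ constant equal to $1-x_j$, it cannot be flipped back, your test fails, and you reject a reachable instance. Concretely, take $n=2$, $f_1\equiv 1$, $f_2\equiv 1$, $x=00$, $y=01$: one has $00\mpite f 0\mpup\mpite f 01$, so $y\in\mpReach f(x)$; yet your saturation reaches $\mpup\mpup$, component $1$ has $x_1=y_1=0$, its flip guard asks for some $z$ with $f_1(z)=0$, there is none, and you wrongly output ``not reachable''. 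The paper's procedure handles exactly this by \emph{iterating}: it keeps a set $L$ of components locked at their $x$-value, saturates only outside $L$, collects into $\bar I^L$ the raised components $i$ for which no $z$ in the current hypercube has $f_i(z)=y_i$, and if $\bar I^L\neq\emptyset$ it enlarges $L$ by $\bar I^L$ and re-saturates. Correctness is argued by showing these iterates converge to the unique $\subseteq$-minimal $L^*$ meeting the hypercube criterion of Proposition~\ref{prop:hypercube_reachability}. This locking/re-saturation loop is the missing idea; your single-pass scheme cannot recover from having raised a component that should have been left alone.

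Your attractor argument also slips: you assert that the saturation from $x$ ``can never leave $h_A$'', but that holds only when $x\in c(h_A)$, and the hypothesis is merely that $y$ lies in an attractor. The paper's argument is different and does not use Proposition~\ref{pro:attractors}: the full saturation $\hat z^{\emptyset}$ from $x$ always yields a hypercube closed by $f$, hence $\mpReach f(y)\subseteq\mptob(\hat z^{\emptyset})$ whenever $y$ is reachable; if some $i\in\bar I^{\emptyset}$ existed then no configuration reachable from $y$ could return to $y$ (since $f_i$ never equals $y_i$ on the whole hypercube), contradicting that $y$ is in an attractor. Thus $\bar I^{\emptyset}=\emptyset$, the procedure terminates after one iteration, and at most $n$ transitions are computed.
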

\begin{proof}
    \def\stageAL{\stageA^{L}}
    Let us consider the following procedure with $L\subseteq\range n$, initially
    with $L=\emptyset$:
    \begin{enumerate}
    \item \label{it:build}
        From $x$, apply only transitions of the form $\B\to \{\mpup,\mpdw\}$ to components
            $i\in\range n\setminus L$.
    Let us denote by $\stageAL\in\mpDom^n$ the (unique) reached configuration.
    \item \label{it:reject} If $y\notin\mptob(\stageAL)$, then $y$ is not reachable from $x$.
    \item Otherwise, let us consider the components that cannot reach their value in $y$ from
        $\stageAL$,
    $\bar I^L\DEF\{ i\in\range n\mid \stageAL_i\notin \B\wedge \nexists z\in\mptob(\stageAL),
    f_i(z)=y_i \}$:
    \begin{enumerate}
        \item \label{it:accept} If $\bar I^L=\emptyset$, then $\stageAL\mpreach f y$.
        \item \label{it:iterate} Otherwise, repeat the procedure with $L:=L\cup\bar I^L$.
    \end{enumerate}
    \end{enumerate}
    Remark that this procedure can be iterated at most $n$ times, each of them computing at most
    $n-\card L$ transitions.
    Its correctness can be demonstrated as follows.

    By Lemma~\ref{lem:reachability}, $x\mpreach f y$ if and only if there exists $L\subseteq\range n$ such that
    $y\in\mptob(\stageAL)$ and $\bar I^L=\emptyset$.
    Notice that there is a unique $\subseteq$-minimal $L^*$ verifying
    $y\in\mptob(\stageA^{L^*})$ and
    $\bar I^{L^*}=\emptyset$:
    if $L^1$ and $L^2$ verify these properties, then so does $L^1\cap L^2$.

    Let us denote by $L^0,\dots,L^m$ the successive values of $L$ at the beginning of each iteration
    of the procedure ($L^0=\emptyset$).
    We prove that $L^*=L^m$.
    Let us admit that $L^k\subseteq L^*$ with $k<m$.
    By construction, $\mptob(\stageA^{L^*})\subseteq\mptob(\stageA^{L^k})$.
    Let us assume there exists $i\in\bar I^{L^k}$ and $i\notin L^*$.
    Then, $\stageA^{L^*}=\stageA^{L^k}\notin\B$, and there exists
    $z\in\mptob(\stageA^{L^*})$ with $f_i(z)=y_i$, which is a contradiction.

    Whenever \(y\) belongs to an attractor, \(\bar I^\emptyset=\emptyset\).
    Indeed, remark that \(\mpReach f(y)\subseteq\mptob(\stageA^\emptyset)\).
    Thus, if there exists a component \(i\in\bar I^\emptyset\), then
    from any configuration \(y'\in\mpReach f(y)\), \(y\notin\mpReach f(y')\), which is a
    contradiction.
    Therefore, the procedure is executed only once, which involves computing at most \(n\)
    transitions.
\end{proof}

Steps 1 and 3 of the procedure check for the existence of transitions in a most-permissive
configuration, i.e., for the existence of a binary configuration compatible with it and such that
the local function has a given value.
This is exactly the SAT problem, which is NP-complete in the general case, and P whenever \(f\) is
locally-monotonic.

\begin{theorem}\label{thm:complexity-reachability}
    Given a BN \(f\) of dimension \(n\) and two configurations \(x,y\in\B^n\), deciding if
    \(y\in\mpReach f(x)\) is in P if \(f\) is locally-monotonic,
    and in P\(^{\text{NP}}\) otherwise.
\end{theorem}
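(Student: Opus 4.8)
The plan is to take the decision procedure already analysed in Lemma~\ref{lem:reach-try} and simply cost its one nontrivial primitive. Recall that procedure: it maintains a set $L\subseteq\range n$, at each round builds the unique configuration $\widehat z$ reached from $x$ by firing transitions of the form $\B\to\{\mpup,\mpdw\}$ on components outside $L$, rejects if $y\notin\mptob(\widehat z)$, and otherwise computes the set of components stuck away from their target value in order to accept, reject, or enlarge $L$ and repeat. By Lemma~\ref{lem:reach-try} this halts within $n$ rounds after computing at most $\frac{n(n-1)}{2}$ transitions, and all the surrounding bookkeeping — notably the tests $y\in\mptob(\widehat z)$, which only compare the Boolean coordinates — is plainly polynomial. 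The single primitive whose cost is not obvious is the \emph{transition test}: given $z\in\mpDom^n$, a component $i$, and a bit $b\in\B$, decide whether some $\tilde x\in\mptob(z)$ has $f_i(\tilde x)=b$; the whole run performs $O(n^2)$ of these.

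First I would handle the general case. The transition test is an instance of propositional satisfiability: keep a Boolean variable for every component $j$ with $z_j\notin\B$, substitute the fixed bits for the remaining components into a circuit (or formula) computing $f_i$, and ask whether the output can equal $b$. Hence the transition test lies in NP, so the procedure is a polynomial-time algorithm issuing polynomially many (adaptive) queries to an NP oracle; this puts $y\in\mpReach f(x)$ in $\mathrm P^{\mathrm{NP}}=\Delta^P_2$.

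Then I would do the locally-monotonic case. Fix $i$ and let $\preceq^i\in\{\leq,\geq\}^n$ be an ordering witnessing monotonicity of $f_i$. Let $h$ be the hypercube with $c(h)=\mptob(z)$, and let $\tilde x^{\top}\in c(h)$ be the completion that assigns each free component $j$ the value $1$ when the $j$-th entry of $\preceq^i$ is $\leq$ and the value $0$ when it is $\geq$; this is the $\preceq^i$-greatest element of $c(h)$, so $f_i(\tilde x)\le f_i(\tilde x^{\top})$ for all $\tilde x\in\mptob(z)$. Therefore there is a $\tilde x\in\mptob(z)$ with $f_i(\tilde x)=1$ iff $f_i(\tilde x^{\top})=1$, and symmetrically the $b=0$ test reduces to evaluating $f_i$ at the $\preceq^i$-least element of $c(h)$. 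Each transition test thus becomes a single evaluation of a local function, which is polynomial, and the whole procedure runs in P.

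The routine ingredients are the correctness of the procedure (already supplied by Lemma~\ref{lem:reach-try}, itself resting on Lemma~\ref{lem:reachability}) and the polynomial bounds on its arithmetic and bookkeeping. The one point I expect to need care is the monotonic case: I must justify that an extremal completion inside $c(h)$ attains the maximum (resp.\ minimum) of $f_i$ over $\mptob(z)$, and that a witnessing order $\preceq^i$ is actually available to the algorithm — which is immediate for the customary representations (BDDs, Petri nets, automata networks, monotone circuits) but should be spelled out as the precise hypothesis under which the ``P'' bound holds.
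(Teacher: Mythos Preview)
Your proposal is correct and follows exactly the paper's approach: invoke the procedure of Lemma~\ref{lem:reach-try}, observe that its only nontrivial primitive is the existence test ``is there $\tilde x\in\mptob(z)$ with $f_i(\tilde x)=b$?'', and classify that test as SAT in general and polynomial under local monotonicity. Your write-up is in fact more detailed than the paper's, which states the theorem immediately after the one-sentence remark that the transition test ``is exactly the SAT problem, which is NP-complete in the general case, and P whenever $f$ is locally-monotonic''; your explicit extremal-completion argument and your caveat about the availability of the witnessing order $\preceq^i$ go beyond what the paper spells out.
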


\subsubsection{Attractors}
Attractors of the BN \(f\) with the most permissive semantics match exactly with the
\textit{minimal trap spaces} of \(f\)~\cite{Klarner15-TrapSpaces}
(Proposition~\ref{pro:attractors}).
Thus, Deciding if a given configuration \(x\) belongs to an boils down to deciding if the smallest
hypercube closed by \(f\) and containing \(x\) is minimal.

The fact that an attractor is necessarily an hypercube comes from the property that if two
configurations lying on a diagonal of an hypercube are within the same attractor, then all adjacent
configurations are within the attractor as well.
This is illustrated by the following drawing, where boxed configurations belongs to a same
attractor:
\begin{center}
    \begin{tabular}{m{4cm}cm{4cm}}
        \begin{tikzpicture}
\matrix[column sep=0.8cm, row sep=1cm,gray] {
    \node[black] (s010) {$\bf 010$}; \&
      \node[black,draw] (s110) {$\bf 110$};
\\
\node[black,draw] (s000) {$\mathbf{000}$}; \&
    \node[black] (s100) {$\bf 100$};
\\
};
\matrix[column sep=0.8cm, row sep=1cm,shift={(1cm,0.6cm)},gray] {
  \node (s011) {$011$}; \&
  \node (s111) {$111$};
\\
  \node (s001) {$001$}; \&
  \node (s101) {$101$};
\\
};
\path[gray]
    (s000) edge (s010) edge (s100) edge[densely dashed] (s001)
    (s110) edge (s010) edge (s100) edge (s111)
    (s001) edge[densely dashed] (s011) edge[densely dashed] (s101)
    (s111) edge (s011) edge (s101)
    (s010) edge (s011)
    (s100) edge (s101)
    ;
        \end{tikzpicture}&
        \(\Longrightarrow\)
        &
\begin{tikzpicture}
\matrix[column sep=0.8cm, row sep=1cm,gray] {
    \node[black,draw] (s010) {$\bf 010$}; \&
      \node[black,draw] (s110) {$\bf 110$};
\\
\node[black,draw] (s000) {$\mathbf{000}$}; \&
    \node[black,draw] (s100) {$\bf 100$};
\\
};
\matrix[column sep=0.8cm, row sep=1cm,shift={(1cm,0.6cm)},gray] {
  \node (s011) {$011$}; \&
  \node (s111) {$111$};
\\
  \node (s001) {$001$}; \&
  \node (s101) {$101$};
\\
};
\path[gray]
    (s000) edge (s010) edge (s100) edge[densely dashed] (s001)
    (s110) edge (s010) edge (s100) edge (s111)
    (s001) edge[densely dashed] (s011) edge[densely dashed] (s101)
    (s111) edge (s011) edge (s101)
    (s010) edge (s011)
    (s100) edge (s101)
    ;
        \end{tikzpicture}   \end{tabular}
    \end{center}

\begin{proposition}\label{pro:attractors}
    \(A\subseteq \B^n\) is an attractor of  \(f\) with the most permissive semantics if and only if 
    there exists a minimal hypercube \(h\in\H^n\) closed by \(f\) such that \(c(h)=A\).
\end{proposition}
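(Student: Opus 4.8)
The plan is to prove the two implications separately, the main tool being the hypercube characterisation of reachability (Proposition~\ref{prop:hypercube_reachability}), preceded by one preliminary observation: \emph{closed hypercubes are invariant under the most permissive semantics}. Indeed, for $x\in\B^n$ and any closed hypercube $h$ with $x\in c(h)$, and for every $K\subseteq\range n$, the smallest $K$-closed hypercube containing $x$ has its configuration set contained in $c(h)$, since $h$ is itself $K$-closed and contains $x$; hence by Proposition~\ref{prop:hypercube_reachability}, $\mpReach f(x)\subseteq c(h)$. Write $h_x$ for the smallest closed hypercube containing $x$ (the intersection of all closed hypercubes through $x$, which is again closed); so $\mpReach f(x)\subseteq c(h_x)$ always. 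Throughout, ``minimal'' means minimal \emph{among closed hypercubes}, i.e.\ a minimal trap space.

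For the \emph{if} direction, suppose $h$ is a minimal closed hypercube and put $A=c(h)$; I would show $\mpReach f(x)=A$ for every $x\in A$, so that $A$ is an attractor (all its configurations have reachable set $A$). The inclusion $\mpReach f(x)\subseteq A$ is the invariant. For the reverse, fix $y\in c(h)$ and apply Proposition~\ref{prop:hypercube_reachability} with $K=\{i\mid h_i=*\}$. Two sub-claims do the work. First, $h$ is the smallest $K$-closed hypercube containing $x$: any $K$-closed hypercube $g$ with $c(g)\subseteq c(h)$ and $x\in c(g)$ is in fact fully closed, since on the components $h$ fixes it agrees with $h$ and there $f$ coincides with $h$ by closedness of $h$; minimality of $h$ then forces $c(g)=c(h)$. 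Second, each free local function $f_i$ ($i\in K$) is non-constant on $c(h)$: were $f_i\equiv c$ there, fixing component $i$ to $c$ in $h$ would give a strictly smaller closed hypercube, contradicting minimality. Hence for every $y\in c(h)$ and every $i\in K$ there is $z\in c(h)$ with $f_i(z)=y_i$, i.e.\ condition~(2) of Proposition~\ref{prop:hypercube_reachability} holds, so $y\in\mpReach f(x)$.

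For the \emph{only if} direction, let $A$ be an attractor and fix $u\in A$. I use the intended reading of ``attractor'' as a $\subseteq$-minimal non-empty set closed under $\mpReach f$ (equivalently a terminal strongly connected set), so that $\mpReach f(v)=A$ for every $v\in A$. Put $h:=h_u$. The invariant gives $A\subseteq c(h)$, and all elements of $A$ share this same smallest closed hypercube: for $v\in A$, $v\in\mpReach f(u)\subseteq c(h)$ gives $c(h_v)\subseteq c(h)$, and symmetrically $c(h)\subseteq c(h_v)$. I then repeat the argument of the \emph{if} direction to obtain $c(h)\subseteq A$: with $K=\{i\mid h_i=*\}$, $h$ is again the smallest $K$-closed hypercube containing $u$ (a $K$-closed $g$ with $u\in c(g)\subseteq c(h)$ is closed, hence $c(h_u)\subseteq c(g)$, so $c(g)=c(h)$); and each $f_i$, $i\in K$, is non-constant on $c(h)$ --- if $f_i\equiv c$, then $u':=u$ with component $i$ reset to $c$ satisfies $u'\in\mpReach f(u)=A$ (from $u$, component $i$ may pass through $\mpup$ or $\mpdw$, justified by $f_i(u)=c$, and settle on $c$), while fixing component $i$ to $c$ in $h$ yields a closed hypercube containing $u'$, so $c(h_{u'})\subsetneq c(h)$, contradicting $h_{u'}=h$. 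Proposition~\ref{prop:hypercube_reachability} then gives $c(h)\subseteq\mpReach f(u)=A$, hence $A=c(h)$. Finally $h$ is minimal: a closed $h'$ with $c(h')\subsetneq c(h)$ would contain some $x'\in c(h')\subseteq c(h)=A$, and then $c(h_{x'})\subseteq c(h')\subsetneq c(h)$ contradicts $h_{x'}=h$.

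I expect the main obstacle to be the bookkeeping relating $K$-closed and fully closed hypercubes --- in particular the claim that the smallest $\{i\mid h_i=*\}$-closed hypercube containing $x$ coincides with $h$ itself, and the non-constancy of the free local functions on $c(h)$, which is precisely what makes condition~(2) of Proposition~\ref{prop:hypercube_reachability} available for \emph{every} target in $c(h)$. A lesser point is pinning down the reading of ``attractor'' as a $\subseteq$-minimal non-empty set closed under $\mpReach f$, so that $\mpReach f(v)=A$ holds for all $v\in A$; this is what licenses the step $u'\in\mpReach f(u)\Rightarrow h_{u'}=h$ in the \emph{only if} direction.
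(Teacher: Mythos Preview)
Your proof is correct and, in several places, more carefully argued than the paper's own.  Both arguments hinge on Proposition~\ref{prop:hypercube_reachability} and on the fact that closed hypercubes are invariant for \(\mpReach f\), and both ultimately reduce the statement to showing that on the smallest closed hypercube \(h\) through a configuration in \(A\), every free local function \(f_i\) is non-constant on \(c(h)\).

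The difference lies in how this non-constancy is obtained.  The paper singles out the \emph{opposite corner} \(y\) of \(x\) in \(h\), argues that \(y\in\mpReach f(x)\), invokes the attractor hypothesis to get \(x\in\mpReach f(y)\), deduces \(h_y=h\), and then uses that \(\{x_i,y_i\}=\{0,1\}\) for each free \(i\) to rule out constancy of \(f_i\) (a constant value would give a strictly smaller closed hypercube through one of \(x,y\)).  You bypass the opposite-corner device and argue directly: if \(f_i\equiv c\) on \(c(h)\), fixing component \(i\) to \(c\) yields a strictly smaller closed hypercube through some element of \(A\), contradicting either minimality of \(h\) (in the \emph{if} direction) or the invariant \(h_v=h\) for all \(v\in A\) (in the \emph{only if} direction).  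Your route is slightly more self-contained and avoids the somewhat elliptic ``according to the previous section'' and ``otherwise \(h\) would not be closed'' justifications in the paper.  You also make explicit the auxiliary claim that \(h\) is the smallest \(K\)-closed hypercube containing \(x\) for \(K=\{i\mid h_i=*\}\), which the paper leaves implicit.

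Your caveat about the reading of ``attractor'' is well taken: the paper's Definition~10 literally only asks that all elements of \(A\) share the same reachable set, whereas both proofs (the paper's and yours) use the stronger property \(\mpReach f(v)=A\) for all \(v\in A\).  Flagging this is appropriate.
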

\begin{proof}
Let us consider a configuration \(x\in A\), and
let \(h\in\H^n\) be the smallest hypercube closed by \(f\) containing \(x\).
Let us denote by \(y\in c(h)\) the configuration of this hypercube which is the most distant from \(x\):
$\forall i\in\range n$, \(y_i=\neg x_i\) whenever \(h_i=* \), otherwise \(y_i=x_i\).
According to the previous section, remark that \(y\) is reachable from \(x\); thus by attractor
hypothesis, \(x\) is also reachable from \(y\).
Now remark that the smallest hypercube closed by \(f\) and containing \(y\) is the same \(h\)
(otherwise \(h\) would not be closed).
Thus, for any component \(i\in\range n\) which is free in \(h\) (\(h_i=*\)), there exists a
configuration \(z\in c(h)\) such that \(f_i(z)=0\) and a configuration \(z'\in c(h)\) such that \(f_i(z')=1\).
Therefore, any configuration of the hypercube \(h\) is reachable from \(x\).
Finally, notice that if there exists a configuration \(y\in c(h)\) reachable from \(x\), but where
\(x\) does not belong to the smallest hypercube closed by \(f\) and containing \(y\),
then \(x\) is not reachable from \(y\), and thus does not belong to any attractor.
\end{proof}

\begin{theorem}
    Given a BN \(f\) of dimension \(n\) and a configuration \(x\in\B^n\),
    deciding if \(x\) belongs to an attractor of \(f\) with the most permissive semantics is in
coNP whenever \(f\) is locally monotonic, and in coNP\(^{\text{coNP}}\) otherwise.
\end{theorem}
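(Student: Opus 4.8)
The plan is to use Proposition~\ref{pro:attractors} to replace the attractor-membership question by a question about trap spaces, and then to account for the cost of the two ingredients involved: computing the smallest closed hypercube containing a configuration, and testing whether it is minimal. Concretely, by Proposition~\ref{pro:attractors}, $x$ belongs to an attractor if and only if the smallest hypercube $h$ closed by $f$ with $x\in c(h)$ is a minimal trap space: if $x\in A$ and $A=c(h^*)$ for a minimal closed hypercube $h^*$, then $h$ is closed with $c(h)\subseteq c(h^*)$, so $h=h^*$ by minimality of $h^*$; conversely, if $h$ is a minimal trap space then $c(h)$ is an attractor containing $x$. So the whole argument reduces to computing $h$ and checking its minimality.

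First I would spell out how to compute, for any given configuration $v\in\B^n$, the smallest closed hypercube $h_v$ containing it, by monotone saturation: start from the fully-fixed hypercube $v$; in each round, for every component $i$ still fixed in the current hypercube $g$, test whether there is $z\in c(g)$ with $f_i(z)\neq g_i$, and free every such $i$. This is non-decreasing for $c(\cdot)$ and frees at least one component in every round that is not the last, so it halts within $n$ rounds; a straightforward induction shows that the set of configurations of the final hypercube is contained in that of every closed hypercube containing $v$, so it is indeed $h_v$. Each elementary test is exactly the restricted satisfiability query already isolated after Lemma~\ref{lem:reach-try} (``does some binary configuration compatible with a given hypercube give a prescribed value to $f_i$?''), which is in P when $f$ is locally monotonic and in NP in general. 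Consequently $h_v$ (in particular $h=h_x$) is computable in polynomial time if $f$ is locally monotonic, and in $\mathrm P^{\mathrm{NP}}$ otherwise.

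Next I would characterise minimality by a single guessed configuration: $h$ is \emph{not} a minimal trap space if and only if there exists $y\in\B^n$ whose smallest closed hypercube $h_y$ satisfies $c(h_y)\subsetneq c(h)$. The direction $(\Leftarrow)$ is immediate, $h_y$ being a closed hypercube strictly below $h$. For $(\Rightarrow)$, if a closed $h'$ has $c(h')\subsetneq c(h)$, pick any $y\in c(h')$; then $h_y$ is smaller than $h'$ by minimality of $h_y$ among closed hypercubes containing $y$, so $c(h_y)\subseteq c(h')\subsetneq c(h)$. Note that $c(h_y)\subsetneq c(h)$ already forces $y\in c(h)$, so the guess need not be constrained. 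Hence the complement problem ``$x$ does not belong to an attractor'' is exactly: there exists $y\in\B^n$ such that, after computing $h=h_x$ from $x$ and $h_y$ from $y$, the hypercube $h_y$ fixes every component $h$ fixes (to the same value) and fixes at least one more.

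Finally I would do the complexity bookkeeping. When $f$ is locally monotonic, the binary predicate ``$c(h_y)\subsetneq c(h)$'' of $(x,y)$ is polynomial-time computable (both saturations run in P), so ``$x$ does not belong to an attractor'' is in NP, hence the problem is in coNP. When $f$ is arbitrary, that same predicate is in $\mathrm P^{\mathrm{NP}}\subseteq\mathrm{NP}^{\mathrm{NP}}$, and $\mathrm{NP}^{\mathrm{NP}}$ is closed under a polynomially bounded existential quantifier over $y$, so ``$x$ does not belong to an attractor'' is in $\mathrm{NP}^{\mathrm{NP}}$ and the problem is in $\mathrm{coNP}^{\mathrm{coNP}}$. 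I expect the main obstacle to be the minimality characterisation of the third paragraph — in particular the claim that one guessed configuration (equivalently, one extra fixed coordinate) always suffices as a witness of non-minimality — while the saturation procedure and the oracle-counting are routine given the material already developed for reachability.
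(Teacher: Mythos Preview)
Your proposal is correct and reaches the same bounds as the paper, but it is organised differently and is in one respect more complete.

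The paper's proof introduces two auxiliary decision problems: \textsc{Is-Closed}$(f,h)$, which is coNP (P in the locally-monotonic case), and \textsc{Is-Not-Minimal}$(f,h)$, solved by guessing a strictly smaller hypercube $h'$ and calling an \textsc{Is-Closed} oracle on it, hence $\mathrm{NP}^{\text{coNP}}$ (NP in the locally-monotonic case); complementing gives the stated classes. The paper never spells out how one obtains, from the input configuration $x$, the smallest closed hypercube $h$ on which \textsc{Is-Minimal} is to be run. You close this gap explicitly with the saturation procedure, which is exactly the right argument and costs $\mathrm P$ or $\mathrm P^{\mathrm{NP}}$ as you say.

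The second difference is the witness for non-minimality. The paper guesses the smaller \emph{hypercube} $h'$ directly and verifies closedness with a coNP oracle; you guess a \emph{configuration} $y$ and recompute $h_y$ by saturation using NP-type oracle calls. Since NP and coNP are interchangeable as oracle sets, the resulting classes coincide, and your characterisation ``$h$ is not minimal iff $\exists y$ with $c(h_y)\subsetneq c(h)$'' is sound (the $(\Rightarrow)$ direction you give is exactly the point). Your route has the virtue of reusing the reachability machinery already developed around Lemma~\ref{lem:reach-try} instead of introducing new named subproblems; the paper's route is slightly shorter because the guessed object is already the certificate and only one oracle call is needed per guess. Either argument is fine.
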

\begin{proof}
    Consider IS-NOT-CLOSED(\(f,h\)) the problem of deciding if the given hypercube \(h\) is
    \emph{not} closed by \(f\):
    it is equivalent to deciding if there exists component
\(i\in\range n\) with \(h_i\neq *\) and \(z\in c(h)\) such that \(f_i(z)\neq h_i\), which is
NP-complete in general, and P whenever \(f\) is locally monotonic.
Then, the complementary problem IS-CLOSED(\(f,h\)) is in coNP in the general case and in P in the
locally-monotonic case.

    Consider IS-NOT-MINIMAL(\(f,h\)) the problem of deciding if the hypercube \(h\) closed by \(f\)
    is \emph{not} minimal: it can be solved by deciding wherever there exists an hypercube \(h'\) which is
    strictly included in \(h\) and which is closed by \(f\), which is at most NP$^{\text{IS-CLOSED}}$.
    Thus, the complementary problem IS-MINIMAL(\(f,h\)) is in coNP$^{\text{IS-CLOSED}}$, i.e.,
coNP$^{\text{coNP}}=\Pi^{\text P}_2$ in the general case and coNP in the locally-monotonic case.
\end{proof}

\section{Discussion}

The characterization of reachability and attractors with the most permissive
semantics matches with prior introduced approximations for BNs:
the reachability analysis in most permissive semantics is very close to the
\emph{meta-state} semantics of \cite{Caspots-BioSystems16} which was introduced as an
over-approximation of reachability in BNs with (generalized) asynchronous update.
Moreover, the attractors of the most permissive semantics match with the
\emph{minimal trap spaces} \cite{Klarner15-TrapSpaces} of BNs, which are used to
approximate attractors in BNs with asynchronous update (which can be different from hypercubes).

Dynamics of BNs with usual updating modes is often represented with state transition graphs, where
nodes are the Boolean configurations (states), and edges represent the possible iterations (transitions).
Such an object is much less relevant with the most permissive semantics as there would be a direct
transition from a configuration to each of the configurations reachable from it.
Alternatively, hierarchies of trap spaces (hypercubes), as described in \cite{Klarner15-TrapSpaces}
constitutes a more promising structure to visualize the attractor basins and undergoing differentiation
processes.

The model refinement criteria we consider is very general and aims at introducing as little as
biases as possible without extra information.
Nevertheless, exploring different sub-classes of admissible model refinements of BNs and define minimal
Boolean semantics capturing them constitutes a challenging research direction.

An extended discussion and assessment of the most permissive semantics for the modeling of
biological networks can be found is \cite{MPBNs}.
Case studies are given in Appendix~\ref{appendix:case-studies}.

\bibliographystyle{unsrturl}
\bibliography{techreport}

\begin{thebibliography}{10}

\bibitem{MPBNs}
Lo{\"\i}c Paulev{\'e}, Juraj Kol{\v c}{\'a}k, Thomas Chatain, and Stefan Haar.
\newblock Reconciling qualitative, abstract, and scalable modeling of
  biological networks.
\newblock {\em bioRxiv}, 2020.
\newblock \href {http://dx.doi.org/10.1101/2020.03.22.998377}
  {\path{doi:10.1101/2020.03.22.998377}}.

\bibitem{Klarner15-TrapSpaces}
Hannes Klarner, Alexander Bockmayr, and Heike Siebert.
\newblock Computing maximal and minimal trap spaces of {B}oolean networks.
\newblock {\em Natural Computing}, 14(4):535--544, 2015.
\newblock \href {http://dx.doi.org/10.1007/s11047-015-9520-7}
  {\path{doi:10.1007/s11047-015-9520-7}}.

\bibitem{Caspots-BioSystems16}
Max Ostrowski, Lo{\"i}c Paulev{\'e}, Torsten Schaub, Anne Siegel, and Carito
  Guziolowski.
\newblock Boolean network identification from perturbation time series data
  combining dynamics abstraction and logic programming.
\newblock {\em Biosystems}, 149:139 -- 153, 2016.
\newblock \href {http://dx.doi.org/10.1016/j.biosystems.2016.07.009}
  {\path{doi:10.1016/j.biosystems.2016.07.009}}.

\bibitem{Formenti2014}
Enrico Formenti, Luca Manzoni, and Antonio~E. Porreca.
\newblock Fixed points and attractors of reaction systems.
\newblock In {\em Language, Life, Limits}, pages 194--203. Springer
  International Publishing, 2014.
\newblock \href {http://dx.doi.org/10.1007/978-3-319-08019-2_20}
  {\path{doi:10.1007/978-3-319-08019-2_20}}.

\bibitem{Dennunzio2019}
Alberto Dennunzio, Enrico Formenti, Luca Manzoni, and Antonio~E. Porreca.
\newblock Complexity of the dynamics of reaction systems.
\newblock {\em Information and Computation}, 267:96--109, aug 2019.
\newblock \href {http://dx.doi.org/10.1016/j.ic.2019.03.006}
  {\path{doi:10.1016/j.ic.2019.03.006}}.

\bibitem{Nakamura1981}
Katsuhiko Nakamura.
\newblock Synchronous to asynchronous transformation of polyautomata.
\newblock {\em Journal of Computer and System Sciences}, 23(1):22--37, August
  1981.
\newblock \href {http://dx.doi.org/10.1016/0022-0000(81)90003-9}
  {\path{doi:10.1016/0022-0000(81)90003-9}}.

\bibitem{gekakasc12a}
M.~Gebser, R.~Kaminski, B.~Kaufmann, and T.~Schaub.
\newblock {\em Answer Set Solving in Practice}.
\newblock Synthesis Lectures on Artificial Intelligence and Machine Learning.
  Morgan and Claypool Publishers, 2012.

\bibitem{clingo}
Martin Gebser, Roland Kaminski, Benjamin Kaufmann, and Torsten Schaub.
\newblock Clingo = {ASP} + control: Preliminary report.
\newblock {\em CoRR}, abs/1405.3694, 2014.

\bibitem{ColomotoNotebook2018}
Aur{\'e}lien Naldi, C{\'e}line Hernandez, Nicolas Levy, Gautier Stoll, Pedro~T.
  Monteiro, Claudine Chaouiya, Tom{\'a}{\v s} Helikar, Andrei Zinovyev,
  Laurence Calzone, Sarah Cohen-Boulakia, Denis Thieffry, and Lo{\"i}c
  Paulev{\'e}.
\newblock {The CoLoMoTo Interactive Notebook: Accessible and Reproducible
  Computational Analyses for Qualitative Biological Networks}.
\newblock {\em {Frontiers in Physiology}}, 9:680, 2018.
\newblock \href {http://dx.doi.org/10.3389/fphys.2018.00680}
  {\path{doi:10.3389/fphys.2018.00680}}.

\bibitem{Cohen2015}
David P.~A. Cohen, Loredana Martignetti, Sylvie Robine, Emmanuel Barillot,
  Andrei Zinovyev, and Laurence Calzone.
\newblock Mathematical modelling of molecular pathways enabling tumour cell
  invasion and migration.
\newblock {\em PLoS Comput Biol}, 11(11):e1004571, 2015.
\newblock \href {http://dx.doi.org/10.1371/journal.pcbi.1004571}
  {\path{doi:10.1371/journal.pcbi.1004571}}.

\bibitem{Abou-Jaoude2015}
Wassim Abou-Jaoud{\'e}, Pedro~T. Monteiro, Aur{\'e}lien Naldi, Maximilien
  Grandclaudon, Vassili Soumelis, Claudine Chaouiya, and Denis Thieffry.
\newblock Model checking to assess {T}-helper cell plasticity.
\newblock {\em Frontiers in Bioengineering and Biotechnology}, 2, 2015.
\newblock \href {http://dx.doi.org/10.3389/fbioe.2014.00086}
  {\path{doi:10.3389/fbioe.2014.00086}}.

\bibitem{Didier11}
Gilles Didier, Elisabeth Remy, and Claudine Chaouiya.
\newblock Mapping multivalued onto {B}oolean dynamics.
\newblock {\em Journal of Theoretical Biology}, 270(1):177 -- 184, 2011.
\newblock \href {http://dx.doi.org/10.1016/j.jtbi.2010.09.017}
  {\path{doi:10.1016/j.jtbi.2010.09.017}}.

\bibitem{Remy2015}
Elisabeth Remy, Sandra Rebouissou, Claudine Chaouiya, Andrei Zinovyev, Fran{\c
  c}ois Radvanyi, and Laurence Calzone.
\newblock A modeling approach to explain mutually exclusive and co-occurring
  genetic alterations in bladder tumorigenesis.
\newblock {\em Cancer Research}, 75(19):4042--4052, aug 2015.
\newblock \href {http://dx.doi.org/10.1158/0008-5472.can-15-0602}
  {\path{doi:10.1158/0008-5472.can-15-0602}}.

\bibitem{Mendes2018}
Nuno~D. Mendes, Rui Henriques, Elisabeth Remy, Jorge Carneiro, Pedro~T.
  Monteiro, and Claudine Chaouiya.
\newblock Estimating attractor reachability in asynchronous logical models.
\newblock {\em Frontiers in Physiology}, 9, 2018.
\newblock \href {http://dx.doi.org/10.3389/fphys.2018.01161}
  {\path{doi:10.3389/fphys.2018.01161}}.

\bibitem{VLBNs-v1}
Loïc Paulevé.
\newblock {VLBNs - Very Large Boolean Networks (Version 1) [dataset]}, 2020.
\newblock Zenodo. \url{https://doi.org/10.5281/zenodo.3714876}.

\bibitem{Albert2002}
R{\'{e}}ka Albert and Albert-L{\'{a}}szl{\'{o}} Barab{\'{a}}si.
\newblock Statistical mechanics of complex networks.
\newblock {\em Reviews of Modern Physics}, 74(1):47--97, 2002.
\newblock \href {http://dx.doi.org/10.1103/revmodphys.74.47}
  {\path{doi:10.1103/revmodphys.74.47}}.

\end{thebibliography}

\clearpage
\appendix

\section{Complexity of Dynamical Properties with (A)synchronous Semantics}
\label{appendix:complexity}

Let us fix a BN \(f\) of dimension \(n\).

\subsection{Update semantics}

BN semantics are expressed as irreflexive binary relations between the configurations.
We use the symbol \(\rightarrow\) decorated with the Boolean function and a symbol representing the
semantics.

\begin{definition}[Synchronous semantics]
    \begin{equation*}
        \forall x,y\in\B^n\quad
        x\site f y\EQDEF x\neq y\wedge y=f(x)
        \enspace.
    \end{equation*}
\end{definition}

\begin{definition}[Fully asynchronous semantics]
\begin{equation*}
\forall x,y\in\B^n,\quad
x\fite f y \EQDEF \exists i\in\range n: \Delta(x,y)=\{i\}\wedge y_i=f_i(x)
    \enspace.
\end{equation*}
\end{definition}

\begin{definition}[Asynchronous semantics]
\begin{equation*}
\forall x,y\in\B^n,\quad
    x\ite f y \EQDEF x\neq y\wedge \forall i\in\Delta(x,y), y_i=f_i(x)
    \enspace.
\end{equation*}
\end{definition}

Given a semantics \(\some\), we write
 \(\dite\some f\closure\) the reflexive and transitive closure of the binary relation \(\dite\some f\).
 Thus, \(x\dite\some f\closure y\) if and only if \(x=y\) or there exists a sequence
 \(x\dite\some f x' \dite\some f \cdots\dite\some f y\).
 The set of configurations which are in such a relation with a configuration \(x\) is given by
 \(\dReach\some f(x)\):
\begin{equation}
    \dReach\some f(x) \DEF \{ y \in\B^n\mid x\dite\some f\closure y\}\enspace.
\end{equation}

\subsection{Fixed points}
Remark that with synchronous, fully asynchronous, and asynchronous semantics, \(x\in\B^n\) is a
fixed point if and only if \(f(x)=x\).

\begin{proposition}\label{pro:fixpoint-complexity}
    Deciding if there exists \(x\in\B^n\) such that \(f(x)=x\) is NP-complete.
\end{proposition}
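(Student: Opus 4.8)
The plan is to prove the two inclusions separately: membership in NP and NP-hardness.

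For membership, I would observe that a configuration $x\in\B^n$ is itself a polynomial-size certificate of being a fixed point. Assuming, as elsewhere in the paper, that the local functions $f_1,\dots,f_n$ are supplied in a representation permitting polynomial-time evaluation at a point (Boolean circuits, formulas, BDDs, Petri nets, automata networks, \dots), the verifier simply computes $f(x)$ and checks $f(x)=x$ componentwise, which is polynomial. Hence the problem lies in NP.

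For NP-hardness I would reduce from SAT. Given a Boolean formula $\phi$ over variables $x_1,\dots,x_n$ (with $n\geq 1$), define a BN $f$ of dimension $n$ by
\[
  f_1(x)\DEF (x_1\wedge\phi(x))\vee(\neg x_1\wedge\neg\phi(x)),
  \qquad
  f_i(x)\DEF x_i \text{ for } i\in\{2,\dots,n\}.
\]
Then $f_1(x)$ equals $x_1$ exactly when $\phi(x)$ holds and equals $\neg x_1$ otherwise, while all other components are fixed by construction; thus $f(x)=x$ if and only if $\phi(x)=1$. Consequently $f$ admits a fixed point precisely when $\phi$ is satisfiable. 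Since a description of $f$ in the same format as $\phi$ (e.g.\ as formulas or circuits) can be produced in polynomial time, this is a valid polynomial-time many-one reduction, establishing NP-hardness; together with membership this gives NP-completeness.

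The only point requiring care is that the reduction must be \emph{representation-faithful}: the constructed $f$ must be expressible, in polynomial size, in whichever encoding the statement presupposes for BNs, and one should note that evaluating $f_i$ at a point stays polynomial even though deciding satisfiability of $f_i$ does not. Everything else is routine; there is no genuine obstacle beyond fixing this representational convention.
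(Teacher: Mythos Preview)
Your proof is correct and follows essentially the same approach as the paper, which merely states ``By reduction of the SAT problem'' with a citation. You have spelled out a concrete and valid reduction (and the NP membership argument) that the paper leaves implicit.
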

\begin{proof}
By reduction of the SAT problem \cite{Formenti2014}.
\end{proof}

\subsection{Reachability}

\begin{proposition}
    Given two configurations \(x,y\in\B^n\),
    deciding if \(y\in\dReach\some f(x)\) with \(\some\in\{\us,\uf,\ua\}\) is PSPACE-complete.
\end{proposition}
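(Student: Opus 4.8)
The plan is to establish PSPACE-completeness by the usual two-sided argument. For the upper bound I would observe that, for each semantics $\some\in\{\us,\uf,\ua\}$, the one-step relation $\dite\some f$ is decidable in polynomial time from the representation of $f$, and that $y\in\dReach\some f(x)$ holds if and only if some path of length at most $2^n-1$ connects $x$ to $y$. A nondeterministic procedure can then guess such a path configuration by configuration, remembering only the current configuration, the target $y$, and a binary step-counter bounded by $2^n$ --- all of size $O(n)$ --- and verifying each guessed step against $\dite\some f$. This puts the problem in NPSPACE, hence in PSPACE by Savitch's theorem.

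For hardness I would reduce from the acceptance problem of a deterministic Turing machine $M$ running in space $p(|w|)$ for a fixed polynomial $p$, a canonical PSPACE-complete problem. An instantaneous description of $M$ on $w$ is encodable in $m=O(p(|w|))$ bits (tape contents, one-hot head position, control state); since $M$ is deterministic its single-step successor relation on IDs is a function, and each output bit depends only on a constant-radius window of the input, so in polynomial time one obtains a BN $f$ of dimension $m$ whose local functions compute this successor (and, say, the identity on non-ID configurations). Arranging that $M$ clears its tape and enters a unique canonical accepting ID $y$ upon acceptance, one gets that $M$ accepts $w$ if and only if $y$ is reachable from the initial ID $x$ under iteration of $f$.

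For the synchronous semantics this finishes the reduction immediately, since one step of $\site f$ is one step of $M$, so $x\site f\closure y$ iff $M$ accepts $w$. The hard part is the fully asynchronous and asynchronous cases, where updating components one (or a few) at a time in an arbitrary order can leave configurations encoding no valid ID, and one must ensure that no interleaving either deadlocks before reaching a correct ID or sneaks to $y$ along a spurious path. I would handle this with a staged-update gadget: double the bits into a \emph{primary} copy and a \emph{buffer} copy plus a phase flag; in the compute phase each buffer bit is driven monotonically towards the value prescribed by the successor of the primary ID while the primary bits are frozen; when the buffer is complete the flag may advance; in the commit phase the primary bits copy the buffer and the flag resets. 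Since every local function only ever drives a bit towards its phase-prescribed value, any interleaving is confluent within a block and its only exit is the next ID, so reaching the canonical target is equivalent to $M$ accepting. Verifying this confluence-and-no-shortcut property is the main technical obstacle; once it is in place, all three reachability problems are PSPACE-hard, and together with the PSPACE upper bound, PSPACE-complete.
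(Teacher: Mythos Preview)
Your proposal is correct and follows essentially the same route as the paper: the PSPACE upper bound via nondeterministic path-guessing with an $n$-bit counter, and PSPACE-hardness of the asynchronous cases via a phased ``compute/commit/reset'' gadget that forces any (fully) asynchronous schedule to simulate one synchronous step of the original network. The only cosmetic difference is the source of synchronous hardness: you reduce directly from polynomial-space Turing machine acceptance, whereas the paper invokes the known PSPACE-completeness of reachability in reaction systems (a subclass of synchronous BNs) and then reduces synchronous BN reachability to the asynchronous variants; both are standard and yield the same conclusion.
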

\begin{proof}
As there is at most \(2^n\) configurations to explore, the problem is at most in PSPACE, as it
sufficient to apply non-deterministically at most \(2^n-1\) transitions from \(x\) using a
counter on \(n\) bits.

With the synchronous semantics, the PSPACE-hardness derives by reduction of the
reachability problem in reaction systems, a subclass of synchronous BNs \cite{Dennunzio2019}.

With fully asynchronous and asynchronous semantics, the PSPACE-hardness derives by reduction of the
reachability problem in synchronous BNs.
Indeed, similarly to cellular automata \citep{Nakamura1981},
one can define a BN \(f'\) so that asynchronous and fully asynchronous semantics give reachability
relations that are equivalent with the synchronous semantics of \(f\).

A possible construction is to decompose a synchronous transition in several steps which can be
performed asynchronously.
This involves 3 stages:
(a) the computation of the next value for each component \(i\in\range n\);
(b) the application of the new state for each component;
(c) the reset of components introduced by the construction.
We give here an encoding as a BN \(f'\) with \(3n+2\) dimensions:
    one component \(\mathrm z\) for which the state \(1\) triggers the reset of additional
    components (except \(\mathrm z\));
    one component \(\mathrm w\) for which the state \(0\), assuming \(\mathrm z\) has state
    \(0\), triggers the computation stage (a), and the state \(1\) triggers the application stage
    (b).
    For each component \(i\in\range n\) of \(f\), two components \(\mathrm ci\)
    and \(\bar{\mathrm c}i\) are defined, for which the state \(1\) specify respectively if \(f_i\)
    is true or false.
    The end of computation stage (a) is detected whenever for each component \(i\in\range n\),
    either \(\mathrm ci\) or \(\bar{\mathrm c}i\) are in state \(1\).
    Component \(\mathrm w\) then switch to state \(1\);
    then, components \(i\) switch to state \(0\) if and only if \(\bar{\mathrm c}i\) is
    \(1\) and to state \(1\) if and only if \(\mathrm ci\) is \(1\).
    The end of application stage (b) is detected whenever all the components \(i\in\range n\) have
    been updated.
    Component \(\mathrm z\) then witch to state \(1\) which will trigger the switch to state 
    \(0\) of components \(\mathrm w\), \(\mathrm ci\) and \(\bar{\mathrm c}i\).
    Finally, component \(z\) switch back to state \(0\), which allows components \(\mathrm
    ci\) and \(\bar{\mathrm c}i\) computing the next state of each components \(i\in\range
    n\).
    This network \(f':\B^{3n+2}\to\B^{3n+2}\) can be formally defined as follows, where \(x_{1..n}\)
    denotes the configuration \(x\) truncated at the first \(n\) components:
    \begin{align*}
        f'_i(x') &= ((\neg x'_{\mathrm w} \vee x'_{\mathrm z}) \wedge x'_i) \vee (x'_{\mathrm w} \wedge \neg x'_{\mathrm z} \wedge x'_{\mathrm ci})
        \\
        f'_{\mathrm ci}(x') &=
            \neg x'_{\mathrm z} \wedge 
            \left(
            (\neg x'_{\mathrm w} \wedge f_i(x'_{1..n}))
            \vee
            (x'_{\mathrm w}\wedge x'_{\mathrm ci})\right)
        \\
        f'_{\bar{\mathrm c}i}(x') &=
            \neg x'_{\mathrm z} \wedge 
            \left(
            (\neg x'_{\mathrm w} \wedge \neg f_i(x'_{1..n}))
            \vee
            (x'_{\mathrm w}\wedge x'_{\bar{\mathrm c}i})\right)
        \\
        f'_{\mathrm w}(x') &=
        \neg x'_{\mathrm z} \wedge \left( \left(x'_{\mathrm w} \vee \textstyle
                \bigwedge_{i\in\range n}
                (x'_{\mathrm ci}\vee x'_{\bar{\mathrm c}i})\right)\right)
        \\
        f'_{\mathrm z}(x') &= \textstyle
        \left(x'_{\mathrm w} \wedge 
                \bigwedge_{i\in\range n}
        (x'_{\mathrm ci}\Leftrightarrow x'_{i} \wedge
        x'_{\bar{\mathrm c}i}\Leftrightarrow \neg x'_{i})\right) 
            \vee
            \left(x'_{\mathrm z} \wedge \left(x'_{\mathrm w}\vee \bigvee_{i\in\range n}
                (x'_{\mathrm ci}\vee x'_{\bar{\mathrm c}i})\right)\right)
    \end{align*}
    It results that for all pairs of configurations \(x,y\in\B^n\),
    \begin{equation*}
    y\in\dReach\us f(x)
    \Longleftrightarrow y0^{2n+2} \in\dReach\ua {f'}(x0^{2n+2})
\Longleftrightarrow y0^{2n+2} \in\dReach\uf {f'}(x0^{2n+2})
\end{equation*}
    where
    \(0^{2n+2}\) is the \(0\) vector of dimension \(2n+2\)
    and \(y0^{2n+2}\) denotes its concatenation to \(y\).
\end{proof}

\subsection{Attractors}

\begin{proposition}
    Given a configuration \(x\in\B^n\),
    deciding if \(x\) belongs to an attractor of \(f\) with semantics
    \(\some\in\{\us,\uf,\ua\}\) is PSPACE-complete.
\end{proposition}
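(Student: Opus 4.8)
The plan is to prove PSPACE membership for all three semantics and PSPACE-hardness by reductions from reachability (PSPACE-complete for each of \(\some\in\{\us,\uf,\ua\}\), as shown above). For the upper bound, observe that \(x\) belongs to an attractor precisely when \(\dReach\some f(x)\) is itself an attractor, equivalently when \(x\in\dReach\some f(y)\) for every \(y\in\dReach\some f(x)\) (since \(\dReach\some f(y)\subseteq\dReach\some f(x)\) always holds, this containment forces equality). Hence it suffices to enumerate the candidates \(y\in\B^n\) with an \(n\)-bit counter and, for each, run the two reachability tests \(y\in\dReach\some f(x)\) and \(x\in\dReach\some f(y)\) --- each decidable in PSPACE by the reachability proposition --- reusing the work tape between tests and rejecting as soon as the first succeeds while the second fails. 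This is a polynomial-space decision procedure, so the problem is in PSPACE for every \(\some\).

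For hardness with the synchronous semantics I would reduce from synchronous reachability. Given \(f\) on \(\B^n\) and \(x,y\in\B^n\) with \(x\neq y\) (the case \(x=y\) yielding a trivially positive reachability instance, mapped to a trivial positive attractor instance such as a fixed point), build \(g\) on \(\B^{n+1}\); writing configurations as \((z,b)\) with \(z\in\B^n\), \(b\in\B\), set \(g_i(z,b)\DEF f_i(z)\) whenever \(b=1\) or \((b=0\wedge z\neq y)\), set \(g_i(y,0)\DEF x_i\) for \(i\in\range n\), and set \(g_{n+1}(z,b)\DEF 1\) if \((b=0\wedge z=y)\) and \(0\) otherwise. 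The synchronous orbit of \(g\) from \((x,1)\) is \((x,1),(f(x),0),(f^{2}(x),0),\dots\); it returns to \((x,1)\) exactly when \(f^{k}(x)=y\) for some \(k\geq 1\) --- the pattern \((y,0)\) being the only predecessor that produces a \(1\) on the last component --- and otherwise it settles into a cycle confined to configurations with last component \(0\), never containing \((x,1)\). Thus \((x,1)\) lies on a cycle of \(g\), i.e. \((x,1)\) belongs to an attractor of \(g\) with the synchronous semantics, if and only if \(y\in\dReach\us f(x)\). Since \(g\) has polynomial size, synchronous attractor membership is PSPACE-hard, hence PSPACE-complete.

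For the fully asynchronous and asynchronous semantics I would compose this gadget with the construction used above to simulate synchronous dynamics by (fully) asynchronous ones: starting from \(g\), let \(g'\) on \(\B^{3n+5}\) be that simulation, so that \(w0^{2n+4}\in\dReach\ua{g'}(c0^{2n+4})\), and likewise with \(\uf\), if and only if \(w\in\dReach\us g(c)\), for all \(c,w\in\B^{n+1}\). The key additional observation is that, because the macro-behaviour of \(g'\) faithfully tracks the \emph{deterministic} synchronous dynamics of \(g\), the configurations reachable from \(c0^{2n+4}\) in \(g'\) are exactly the intermediate micro-configurations occurring along the unique synchronous orbit \(c,g(c),g^{2}(c),\dots\), every such micro-configuration being confluent towards the next clean macro-configuration. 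Consequently this reachable set is strongly connected if and only if that orbit is purely periodic (\(g^{j}(c)=c\) for some \(j\geq 1\)), i.e. if and only if \(c\) belongs to an attractor of \(g\) with the synchronous semantics. Instantiating \(c=(x,1)\) gives: \((x,1)0^{2n+4}\) belongs to an attractor of \(g'\) with \(\uf\) (resp. \(\ua\)) if and only if \(y\in\dReach\us f(x)\); so both these problems are PSPACE-hard, hence, together with the membership bound, PSPACE-complete.

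The main obstacle is the reachable-set identity for \(g'\) in the last paragraph: checking that no asynchronous interleaving escapes the orbit of \(g\) --- a consequence of \(g\) being synchronously deterministic --- and that from every reachable micro-configuration one can always complete the macro-step in progress, which is exactly the confluence property already underlying the simulation used for reachability. The remaining ingredients --- the polynomial-space search for the upper bound and the orbit analysis of the synchronous gadget --- are routine.
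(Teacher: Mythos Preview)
Your argument is correct. The PSPACE upper bound and the lift to the (fully) asynchronous semantics via the simulation $g'$ match the paper's proof; you go slightly further by naming confluence of the simulation (no interleaving escapes the macro-orbit, and every micro-configuration can complete its macro-step) as the one property the lift rests on, which the paper leaves implicit. Where you genuinely diverge is the synchronous hardness step: the paper imports it from an external PSPACE-hardness result for attractor membership in synchronous reaction systems (a subclass of BNs), whereas you give a self-contained reduction from synchronous \emph{reachability} using the gadget $g$ on $\B^{n+1}$ that reroutes $(y,0)$ to $(x,1)$, making $(x,1)$ periodic in $g$ exactly when the $f$-orbit of $x$ meets $y$. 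Your route avoids the external citation and keeps the argument internal to the paper's own toolkit, at the modest cost of the orbit case analysis; the paper's route is shorter but depends on the reaction-systems literature.
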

\begin{proof}
    The problem is in PSPACE as one can solve it through its complementary:
    $x$ does not belong to any attractor if and only if there exists a configuration $y$ such that
    \(y\) is reachable from \(x\) and \(x\) is not reachable from \(y\).

    For the synchronous semantics, the PSPACE-hardness derives by reduction of the same problem in synchronous
    reaction systems, a subclass of BNs \cite{Dennunzio2019}.
    Then, the proof can be lifted to asynchronous and fully asynchronous semantics by the reduction
    of the problem with the synchronous semantics (e.g., by using the construction in the previous
    section).
\end{proof}

\clearpage
\section{Case studies}
\label{appendix:case-studies}

\def\tag{2020-03-19}

\subsection{Code}
A simple implementation of reachability and attractor computations for Most Permissive Boolean
Networks is available at \url{https://github.com/pauleve/mpbn}.
It relies on Answer-Set Reprogramming \cite{gekakasc12a} and the solver \texttt{clingo}
\cite{clingo} which offers features such as minimal model enumeration.

The computational analyzes have been performed within the CoLoMoTo environment
\cite{ColomotoNotebook2018} and can thus be reproduced using the provided notebook files within the
Docker image \texttt{colomoto/colomoto-docker:\tag}:

Using Python (\url{https://python.org}), execute the following command in a terminal:
\begin{lstlisting}
sudo pip install -U colomoto-docker # you may have to use pip3
colomoto-docker -V (*\tag*)
\end{lstlisting}
Alternatively, you can run the image directly with Docker (\url{https://docker.com}):
\begin{lstlisting}
docker run -it --rm -p 8888:8888 colomoto/colomoto-docker:(*\tag*)
\end{lstlisting}
and then open your webbrowser to \url{https://localhost:8888}.
See \url{https://colomoto.org/notebook} for detailed instructions.

The notebook files (with \texttt{.ipynb} extension) used in the next sections can be downloaded from
\url{http://doi.org/10.5281/zenodo.3719097} and then be uploaded and executed within the Jupyter web interface.

\subsection{Models of differentiation processes from literature}
We show on several case studies from literature that MPBNs, although potentially predicting more
behaviors than asyncrhonous BNs, are still stringent enough to predict cell fate decision processes,
i.e., absence of attractor reachable from specific configurations or with specific perturbations.

\subsubsection{Tumour invasion model by Cohen et al. 2015}
This BN of 32 components \cite{Cohen2015} models cellular decision processes involved in tumour invasion, with
attractors related to apoptosis, cell cycle arrest, and various stages leading to metastasis.
We reproduced%
\footnote{Notebook ``\textit{MPBN applied to Tumour invasion model by Cohen et al. 2015.ipynb}'',
visualize online at
\url{https://nbviewer.jupyter.org/gist/pauleve/155737b8efcbc909bca18aadf3520cc0}}
the analysis of reachable attractor in the wild-type model and with p53 and Notch
mutations, and whose combination lead to a loss of capability to reach apoptotic attractors.
Analysis with MPBNs reports the exact same reachable attractors, thus predicting the same as with
fully asynchronous BNs.

\subsubsection{T-cell differentiation model by Abou-Jaoudé et al. 2015}
This multivalued network of 102 components \cite{Abou-Jaoude2015} models reprogramming capabilities across different T-cell types.
We notably reproduced%
\footnote{Notebook ``\textit{MPBN applied to T-Cell differentiation model by Abou-Jaoudé et al.
    2015.ipynb}'', visualize
online at \url{https://nbviewer.jupyter.org/gist/pauleve/88f7e4abadca968d9c7dcdf2c909490c}}
the computation of the \emph{reprogramming graph} between identified cell types and with identified
input conditions, after a booleanization~\citep{Didier11} of the model.
Using MPBNs, the exact same graph is recovered as with fully asynchronous BNs.
The computations have been handled on the original large multivalued model, whereas the original study had to perform approximations through model reduction.
Moreover, the analysis using MPBNs enable devising the nature of all the attractors reachable under the studied conditions
(all fixed-points, except one cyclic under APC condition).

\subsection{Scalability}
The theoretical complexity gain brought by the Most Permissive semantics has a drastic impact for
the analysis of large BNs.
We illustrate it both on large networks from literature, and on very large networks (up to 100,000 components) generated randomly.

\subsubsection{Networks from literature}
In the two analysis of the previous section, the computation of reachable
attractors takes less than 10ms\footnote{Computation times are obtained on an Intel(R) Xeon(R)
E-2124 CPU @ 3.30GHz} on the Tumour invasion model (32 components)
and less than 100ms on the T-cell differentiation model (104 components).

We additionally performed reachable attractor computations on the Bladder tumorigenesis model by Remy et al
2015 \cite{Remy2015}, as it served as benchmark for evaluating simulations methods for devising
reachable attractors and their propensity in \cite{Mendes2018}.
The network has 35 components.
The computation of reachable attractors in diverse settings%
\footnote{Notebook ``\textit{MPBN applied to Bladder Tumorigenesis by Remy et al 2015.ipynb}'', visualize
online at \url{https://nbviewer.jupyter.org/gist/pauleve/b63d362e0df538b022483226a724b408}}
are performed in less than 10ms, whereas simulations of asynchronous BNs were reported taking from
10s to 700s, and possibly not successful.
Note however that contrary to the simulation methods, we are not able to compute nor estimate the
propensity of reachable attractors. Nevertheless, the enumeration is guaranteed to be complete.
In this application, the number of attractors of Most Permissive semantics is the same as the
reported with fully asynchronous semantics.

\subsubsection{Very large networks}
We generated random influence graphs~\cite{VLBNs-v1} with scale-free structure \cite{Albert2002} with a number of
components ranging from 1,000 to 100,000 with in-degrees up to 1,400.
We then applied the inhibitor dominant rule to assign a Boolean function to each component:
the activation occurs only in configurations whenever no inhibitor are active and at least one
activator is active.
The following computations are measured%
\footnote{
    Notebook ``Scalability on large random BNs.ipynb''
    , visualize online at
    \url{https://nbviewer.jupyter.org/gist/pauleve/b7098fb37cac7e318689e41511d10479}.
}:
computation of 1 attractor;
enumeration of (at most) 1000 attractors;
and enumeration of (at most) 1000 attractors reachable from random initial configurations.

Fig.~\ref{fig:scalability-random} summarizes the results and shows that the computation of reachable
attractors take a fraction of a second with 1,0000 components,
less than 2 seconds with 10,000, and less than 50 seconds with 100,000 components.
The computation times exclude the time for parsing the input text file (up to 20s for larger
networks).
\begin{figure}[hp]
\centering
\includegraphics[width=\textwidth]{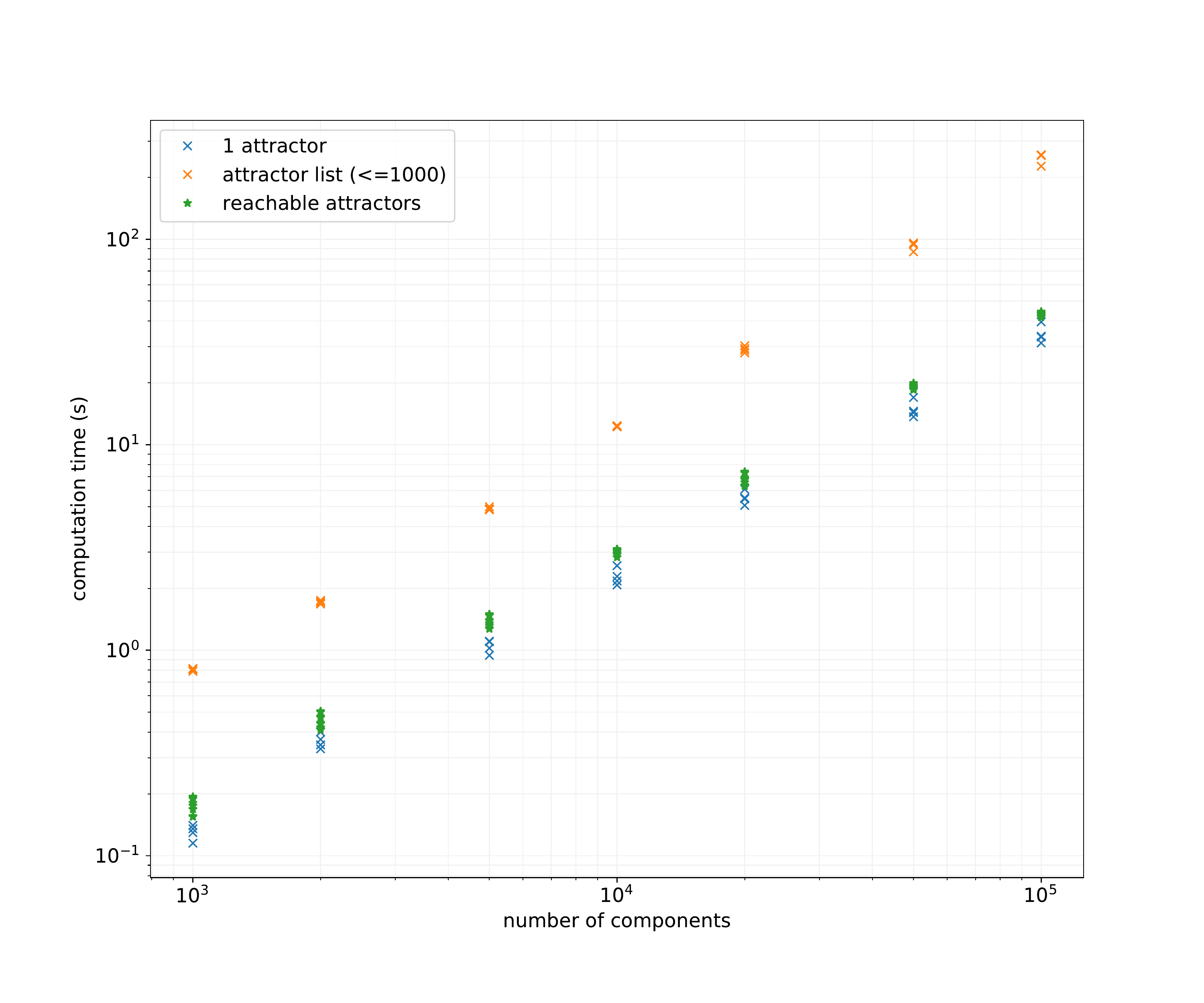}
\caption{Computation times on a 3GHz CPU obtained using \texttt{mpbn} software tool on BNs generated
    with random scale-free influence graph for the computation of a single attractor, the
    enumeration of 1,000 attractor, and the computation of attractors reachable from random initial
conditions.}
\label{fig:scalability-random}
\end{figure}


\end{document}